\newtheorem{theorem}{Theorem}[section]    
\newtheorem{definition}{Definition}[section] 
\newtheorem{corollary}[theorem]{Corollary}    
\newtheorem{lemma}[theorem]{Lemma}    
\renewcommand{\qed}{\hfill{$\rule{6pt}{6pt}$}} 
\renewenvironment{proof}{\noindent{\bf Proof:}}{\qed\\}
\numberwithin{equation}{section}
\newcommand{\size}[1]{\left| #1 \right|}
\newcommand{\set}[1]{\left\{ #1 \right\}}
\newcommand{\ceil}[1]{{\lceil #1 \rceil}}
\DeclareMathOperator{\trace}{Tr}
\DeclareMathOperator{\Order}{O}
\DeclareMathOperator{\expct}{{\mathbb E}}
\newcommand{\ket}[1]{| #1 \rangle}
\newcommand{\ketbra}[2]{| #1 \rangle\!\langle #2 |}
\newcommand{\density}[1]{\ketbra{#1}{#1}}
\newcommand{\ancilla}{\ket{\bar{0}}}
\newcommand{\eqdef}{\coloneqq}
\newcommand{\tensor}{\otimes}
\newcommand{\union}{\cup}
\newcommand{\meet}{\wedge}
\newcommand{\adjoint}{*}
\newcommand{\suppress}[1]{}
\newcommand{\etal}{\emph{et al.\/}}
\DeclareMathOperator{\entropy}{S}
\DeclareMathOperator{\mi}{I}
\newcommand{\rA}{{\mathsf A}}
\newcommand{\cO}{{\mathcal O}}
\newcommand{\cH}{{\mathcal H}}
\newcommand{\cW}{{\mathcal W}}
\newcommand{\cA}{{\mathcal A}}
\newcommand{\linedisj}{{\mathsf L}}
\newcommand{\disj}{{\mathsf D}}
\DeclareMathOperator{\il}{IL}
\DeclareMathOperator{\til}{\widetilde{IL}}
\newcommand{\diam}{\delta}
\newcommand{\hX}{\hat{X}}
\newcommand{\hY}{\hat{Y}}
\newcommand{\hZ}{\hat{Z}}
\newcommand{\nodes}{p}
\title{Quantum Distributed Complexity of Set Disjointness on a Line~\footnote{A preliminary version of this article appeared in the proceedings of ICALP~2020~\cite{MN20-line-disjointness}. Among other improvements, this article corrects an error in the statement of a result from prior work (Theorem~\ref{thm-jrs}) in the conference version, and makes corresponding changes in the rest of the article.}}
\author{Fr{\'e}d{\'e}ric Magniez~\thanks{IRIF, 
Universit{\'e} de Paris and CNRS, 75205 Paris Cedex 13,
France.
Email: \texttt{magniez@irif.fr}~.
}}
\affil{Universit\'e de Paris, IRIF, CNRS, France}
\author{Ashwin Nayak~\thanks{Department of Combinatorics and Optimization,
and Institute for Quantum Computing, University
of Waterloo, 200 University Ave.\ W., Waterloo, ON,
N2L~3G1, Canada. Email: \texttt{ashwin.nayak@uwaterloo.ca}~.
}}
\affil{University of Waterloo, Canada}
\date{September 23, 2021}
\begin{document}

\maketitle

\begin{abstract}
Given~$x,y\in\{0,1\}^n$, Set Disjointness consists in deciding whether~$x_i=y_i=1$ for some index~$i \in [n]$. We study the problem of computing this function in a distributed computing scenario in which the inputs~$x$ and~$y$ are given to the processors at the two extremities of a path of length~$d$. Each vertex of the path has a quantum processor that can communicate with each of its neighbours by exchanging $\Order(\log n)$ qubits per round. We are interested in the number of rounds required for computing Set Disjointness with constant probability bounded away from~$1/2$. We call this problem ``Set Disjointness on a Line''. 

Set Disjointness on a Line was introduced by Le~Gall and Magniez~\cite{LM18-diameter-congest-model} for proving lower bounds on the quantum distributed complexity of computing the diameter of an arbitrary network in the CONGEST model. However, they were only able to provide a lower bound when the local memory used by the processors on the intermediate vertices of the path is severely limited. More precisely, their bound applies only when the local memory of each intermediate processor consists of $\Order( \log n)$ qubits.

In this work, we prove an unconditional lower bound of $\widetilde{\Omega}\big(\sqrt[3]{n d^2}+\sqrt{n} \, \big)$ rounds for Set Disjointness on a Line with~$d + 1$ processors. This is the first non-trivial lower bound when there is no restriction on the memory used by the processors. The result gives us a new lower bound of~$\widetilde{\Omega} \big( \sqrt[3]{n\delta^2}+\sqrt{n} \, \big)$ on the number of rounds required for computing the diameter~$\delta$ of any $n$-node network with quantum messages of size $\Order(\log n)$ in the CONGEST model.

We draw a connection between the distributed computing scenario above and a new model of query complexity. In this model, an algorithm computing a bi-variate function~$f$ (such as Set Disjointness) has access to the inputs~$x$ and~$y$ through two separate oracles~$\cO_x$ and~$\cO_y$, respectively. The restriction is that the algorithm is required to alternately make~$d$ queries to~$\cO_x$ and~$d$ queries to~$\cO_y$, with input-independent computation in between queries. The model reflects a ``switching delay'' of~$d$ queries between a ``round'' of queries to~$x$ and the following ``round'' of queries  to~$y$. The information-theoretic technique we use for deriving the round lower bound for Set Disjointness on a Line also applies to the number of rounds in this query model. We provide an algorithm for Set Disjointness in this query model with round complexity that matches the round lower bound stated above, up to a polylogarithmic factor. This presents a barrier for obtaining a better round lower bound for Set Disjointness on the Line. At the same time, it hints at the possibility of better communication protocols for the problem.
\end{abstract}


\section{Introduction}

\subsection{Context}

The field of Distributed Computing aims to model a collection of processors or computers communicating with each other over some network with the goal of collectively solving a global computational task. This task may depend on the structure of the network and on some additional data distributed among the computers. For instance, one may want to compute the distance between two nodes of the network, or its diameter, a proper colouring, a spanning tree, or even all-pairs shortest paths. 
In the context of cloud computing, data centres serve as special nodes of the network where data are stored. These centres are usually spread all over the world in order to minimise access time by clients.
Since some operations need to be performed in order to synchronise the centres, the distance between these centres influence the quality of the network. For instance, one may want to decide if there is any inconsistency between two or more remote databases, or check for the availability of a common slot for booking some service.

In this work, we focus on the case of two remote data centres deployed on two nodes of a distributed network, and consider the problem of computing Set Disjointness. This fundamental problem, which we denote by~$\disj_n$, consists in deciding whether two $n$-bit input strings~$x$ and~$y$ modelling two remote databases have the bit~$1$ at the same position. (This may indicate a schedule conflict, for instance.)
The problem has been studied extensively in Communication Complexity~\cite{YaoSTOC79}, due to its many applications in other contexts (see, for example, the survey by Chattopadhyay and Pitassi~\cite{CP10-set-disjointness}). In the most basic setting, two remote parties, Alice and Bob, hold the inputs~$x$ and~$y$, respectively. They communicate with each other directly in order to solve the problem, while minimising the total length of the messages exchanged. Depending upon the model of computation and the type of communication channel connecting the players, the messages may be deterministic, randomised, or quantum.

The two-party communication model may be too simplistic in some scenarios, since it assumes instantaneous communication and full access to the input (by the party that is ``given'' the input). To address the first issue,
we may include the communication delay as a multiplicative factor in the communication complexity. However, this would not account for a potentially more sophisticated use of the communication channel between the two parties. Consider the case when the channel consists of a chain of~$d$ devices, say, repeaters. One could use the channel as a network of processors in order to minimise the communication delay, for instance using cached memories.
With regard to the second issue---pertaining to access to the input---the standard two-party model may not be suitable when the inputs are massive, and may only be accessed in small parts. Such access is better modelled as in Query Complexity, in which inputs are accessed by querying \emph{oracles\/} (see, e.g., Refs.~\cite{BuhrmanW02,deWolf19,A18-query-complexity}).

Motivated by a concrete problem in distributed computing, we define a new model of query complexity, two-oracle query complexity with a ``switching delay''.  In this model we consider a single computer with access to two oracles, one for each input~$x$ or~$y$, such that switching between queries to the two inputs involves a time delay~$d$. The delay accounts for the lag in communication between the parties holding the inputs, for instance when the inputs are not physically at the same place. It might be advantageous to balance this delay by making several accesses to the same input, say~$x$, before switching to the other input~$y$; we also incorporate this feature in the model. The new model attempts to address both the issues discussed above, and is described more precisely in Section~\ref{sec-d-query-model}. 

There are several bridges between query complexity and communication complexity, but we are not aware of any previous work in a query model such as the one above.
The two models---communication through a chain of~$d$ devices, and two-oracle query algorithms with a switching delay of~$d$---share some similarities but also have fundamental differences. In the first model, one node has full access to half of the input. In the second model, all the information obtained so far from the inputs~$x$ and~$y$ is kept in the same memory registers, even when the algorithm switches between inputs.

In this work, we show that the above refinements of the two-party communication model and the query model differ significantly from their standard versions for solving Set Disjointness in the quantum setting. Such a difference does not occur in the setting of deterministic or randomised computing, and we do not know whether such a difference arises for another ``natural'' problem.

\subsection{Application to quantum distributed computing}

This study was initially motivated by a problem left open by Le~Gall and Magniez~\cite{LM18-diameter-congest-model} in the context of distributed computing with congestion (CONGEST model). 
They demonstrated the superiority of quantum computing for computing the diameter~$\diam$ of a network with~$\nodes$ nodes (Diameter problem). They designed a quantum distributed algorithm using $\widetilde{\Order}(\sqrt{\nodes \diam} \,)$ synchronised rounds, where simultaneous messages of $\Order(\log \nodes)$ qubits are exchanged at each round between neighbouring nodes in the network. They also established a lower bound of $\widetilde{\Omega}(\sqrt{\nodes}+\diam)$ rounds. 

Classically the congested distributed complexity of Diameter is well understood, and requires $\widetilde{\Theta}(\nodes)$ rounds~\cite{Holzer+PODC12,Peleg+ICALP12,Frischknecht+SODA12}. 
The lower bound is based on the construction of a two-party communication protocol for Set Disjointness from any distributed algorithm for Diameter.
From $n$-bit inputs $x,y$, two pieces of a $\Theta(n)$-node network are constructed by the two players. Then the pieces are connected by~$\Order(\log n)$ edges. The diameter of the resulting network is either $4$ or $5$ depending on the solution to Set Disjointness with inputs~$(x,y)$.
In each round of a distributed algorithm for solving the Diameter Problem on the graph, the total communication between the two pieces of the network is~$\Order( \log^2 n)$. So the classical lower bound of~$\Omega(n)$ for the communication complexity of Set Disjointness implies an~$\Omega(n/ \log^2 n ) $ lower bound on the number of rounds used by the algorithm.

In the quantum setting, the situation is much more complex since Set Disjointness has communication complexity~$\Theta(\sqrt{n} \,)$ for~$n$-bit inputs~\cite{Razborov03-set-disjointness,AA03-spatial-search}. This leads to the lower bound of~$\widetilde{\Omega}(\sqrt{\nodes}+\diam)$ rounds for computing the diameter of a quantum congested network, which is significantly smaller than the upper bound stated above. Nonetheless, Le~Gall and Magniez improved the lower bound for a restricted set of protocols in which each node has memory of size at most~$\mathrm{poly}(\log \nodes)$ qubits.
For this, they used a more refined lower bound for Set Disjointness for bounded-round protocols.

Recall that the number of rounds in a two-party protocol is the number of messages exchanged, where the length of the messages may vary. Braverman, Garg, Ko, Mao, and Touchette~\cite{BGKMT18-bounded-round-disj} showed that the communication complexity of~$r$-round two-party quantum protocols for Set Disjointness on~$n$-bit inputs is~$\widetilde{\Omega}(n/r+r)$. Using this, Le~Gall and Magniez showed that any quantum distributed protocol for Diameter with congestion $\Order(\log \nodes)$ and memory-size $\mathrm{poly}(\log \nodes)$ per node
requires $\widetilde{\Omega}(\sqrt{\nodes \diam} \,)$ rounds. However, without any restriction on the memory size of the nodes, no better bound than $\widetilde{\Omega}(\sqrt{\nodes}+\diam)$ was known.

\subsection{Contributions}

We prove that solving Set Disjointness with the two~$n$-bit inputs given to the processors at the extremities of a line of~$d+1$ quantum processors requires~$\widetilde{\Omega}(\sqrt[3]{n d^2} \,)$ rounds of communication of messages of size $\Order(\log n)$ (\textbf{Theorem~\ref{thm-lb}}).
As a corollary, we get a new lower bound of $\widetilde{\Omega}(\sqrt[3]{\nodes \diam^2} \,)$ rounds for quantum distributed protocols computing the diameter~$\diam$ of a~$\nodes$-node network with congestion $\Order(\log \nodes)$ (\textbf{Corollary~\ref{cor-diam}}). This bound improves on the previous bound of $\widetilde{\Omega}(\sqrt{\nodes} \,)$ rounds when~$\diam \in \widetilde{\Omega}(\sqrt[4]{\nodes} \,)$. The improvement is obtained by a more refined, information-theoretic analysis of a reduction similar to one due to Le~Gall and Magniez~\cite{LM18-diameter-congest-model}. 

We observe that the information-theoretic technique used to derive the above round lower bound for Set Disjointness on the Line also applies to two-oracle query algorithms with switching delay~$d$ (\textbf{Theorem~\ref{thm-lb2}}). We show that this bound, and the bound of ${\Omega}(\sqrt{n} \,)$ coming from the standard query complexity model, are tight to within polylogarithmic factors in different ranges of the parameters~$n$ and~$d$ (\textbf{Theorem~\ref{thm-algo}}). This presents a barrier for obtaining a better round lower bound for Set Disjointness on the Line.  At the same time, it hints at the possibility of better communication protocols for the problem. (Note that the complexity of the problem may be strictly in between the best known bounds.) We hope that these results and, more generally, the models we study also provide a better understanding of quantum distributed computing.

\paragraph{Acknowledgements.}
We are grateful to the anonymous referees for their thorough and helpful feedback. F.M.'s research is supported in part by the ERA-NET Cofund in Quantum Technologies project QuantAlgo and the French ANR Blanc project RDAM. A.N.'s research is supported in part by NSERC Canada.

\section{Preliminaries}

We assume that the reader is familiar with the basic notions of quantum information and computation. We recommend the texts by Nielsen and Chuang~\cite{NC00} and Watrous~\cite{W18-TQI}, and the lecture notes by de Wolf~\cite{deWolf19} for a good introduction to these topics. We briefly describe some notation, conventions, and the main concepts that we encounter in this work. 

We write pure quantum states using the ket notation, for example as~$\ket{\psi}$.
By a quantum register, we mean a sequence of quantum bits (qubits).
We assume for simplicity (and without loss of generality) that the computation in the models we study do not involve any intermediate measurements, i.e., they are unitary until the measurement that is made to obtain the output.

We use the notation $\widetilde{\Order}( \cdot )$ to indicate that we are suppressing factors that are poly-logarithmic in the stated expression. For a positive integer $k$, we denote the set $\{1,2, \dotsc ,k\}$ by $[k]$.
In the sequel, we 
consider the computation of Boolean bi-variate functions~$f\colon \{0,1\}^n \times \{0,1\}^n\to \{0,1\}$ in several models of computation.

\subsection{Quantum distributed computing in the CONGEST model}
\label{sec-congest}

We consider the quantum analogue of the standard CONGEST communication model. We give a brief overview here, and refer the reader to Ref.~\cite{Peleg00} for a more detailed discussion of the model and its variants. The topology of the network is given by some graph~$G \eqdef (V, E)$. Each node in the network has a distinct identifier and represents a processor. Initially, the nodes know nothing about the topology of the network except the set of edges incident on them, and a polynomial upper bound~$\Order(\size{V}^c)$ (for some constant~$c$) on the total number of nodes~$|V|$.  

There are a number of subtleties in the use of shared entanglement in this model, such as what shared states are allowed, how they are distributed, and what knowledge the processors have about the states. These considerations gain more importance in the design of algorithms in the model. That said, we are concerned with \emph{lower bounds\/} for distributed algorithms, and we prove them in a model in which the processors are \emph{the most powerful\/}. The lower bounds so obtained are thus stronger. 
We assume that the processors initially share an arbitrary entangled \emph{pure\/} state that depends only on the number of nodes (but not on the topology of the graph, nor on the inputs the processors may be given). Further, each processor knows the shared state and how it is partitioned amongst the processors in the network. Thus each processor initially also knows the precise number of nodes~$\size{V}$, but not the set of communication links beyond those with its neighbours.

Communication protocols in the CONGEST model are executed  with round-based synchrony. In each round, each node may perform some quantum computation on its local memory and the message registers it uses to communicate with its neighbours. Then each node transfers one message with~$b$ qubits to each adjacent node to complete that round. The parameter~$b$ is called the \emph{congestion\/} or \emph{bandwidth\/} of the communication channels. Unless explicitly mentioned, we assume that the congestion~$b$ is of order~$\log |V|$.
All links and nodes in the network (corresponding to the edges and vertices of $G$, respectively) are reliable and do not suffer any faults. 

In this paper we consider the special case of a~$d$-line network, where~$G$ consists in a single path of length~$d$. The nodes/processors at the extremities receive inputs~$x,y\in \{0,1\}^n$, respectively, and the intermediate nodes get no input. The~$d+1$ processors also share an arbitrary quantum state as described above. 
In this setting, the quantum distributed complexity of~$f$ on a $d$-line is the minimum number of rounds of any quantum protocol that computes $f$ with probability at least~$2/3$ and congestion $\Order(\log n)$.
The complexity of any non-trivial function~$f$ of both its arguments is~$\Omega(d)$. We assume that $d\leq n$; otherwise the complexity of such a function would be~$\Theta(d)$. Note that even in the model without entanglement, we may assume that~$d$ is known to each node. Otherwise,~$d$ can be computed at the cost of $\Theta(d)$ rounds, which does not affect the asymptotic complexity of such functions.


\subsection{Quantum information theory}

We refer the reader to the texts by Nielsen and Chuang~\cite{NC00} and Watrous~\cite{W18-TQI} for the basic elements of quantum information theory.

Unless specified, we take the base of the logarithm function to be~$2$. Whenever we consider information-theoretic quantities involving quantum registers, we assume they are in a quantum state that is implied by the context. For ease of notation, we identify the register with the quantum state. 

For a register~$X$ in state~$\rho$ the \emph{von Neumann entropy\/} of~$X$ is defined as~$\entropy(X)_\rho \coloneqq - \trace ( \rho \log \rho)$. We omit the subscript~$\rho$ when the state of the register is clear from the context. If the state space of~$X$ has dimension~$k$, then~$\entropy(X) \le \log k$. 

Suppose that the registers~$WXYZ$ are in some joint quantum state~$\rho$. The \emph{mutual information\/}~$\mi(X:Y)_\rho$ of~$X$ and~$Y$ is defined as
\[
\mi(X:Y)_\rho \quad \eqdef \quad \entropy(X) + \entropy(Y) - \entropy(XY) \enspace.
\]
The \emph{conditional mutual information\/}~$\mi(X : Y \,|\, Z)_\rho$ of~$X$ and~$Y$ given~$Z$ is defined as
\[
\mi(X : Y \,|\, Z)_\rho \quad \eqdef \quad \mi(X : YZ)_\rho - \mi( X : Z)_\rho \enspace.
\]
We omit the subscript~$\rho$ when the state of the registers is clear from the context. 

Conditional mutual information is invariant under the application of an isometry to any of its three arguments. The quantity also satisfies the following important property.
\begin{lemma}[Data Processing Inequality]
\label{lem-dpi}
$\mi( X : WY \,|\, Z) ~\ge~ \mi( X : Y \,|\, Z)$.
\end{lemma}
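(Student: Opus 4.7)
The plan is to reduce the claim to the strong subadditivity of von Neumann entropy via the chain rule for quantum mutual information. First I would prove (or recall) the chain rule
\[
\mi(X : WY \,|\, Z) \;=\; \mi(X : Y \,|\, Z) + \mi(X : W \,|\, YZ),
\]
which is a purely algebraic consequence of the definitions in the excerpt: both sides expand, after cancellation, into $\mi(X : WYZ) - \mi(X : Z)$. Given this identity, the data processing inequality is equivalent to the non-negativity of the conditional mutual information $\mi(X : W \,|\, YZ)$.

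Next I would unfold $\mi(X : W \,|\, YZ)$ using the von Neumann entropy. Writing $C \eqdef YZ$, one has
\[
\mi(X : W \,|\, YZ) \;=\; \entropy(XC) + \entropy(WC) - \entropy(XWC) - \entropy(C),
\]
so the inequality to establish is
\[
\entropy(XC) + \entropy(WC) \;\ge\; \entropy(XWC) + \entropy(C).
\]
This is precisely the strong subadditivity of von Neumann entropy, applied to the tripartite system $(X, W, C) = (X, W, YZ)$. I would invoke it as a black box, citing Lieb--Ruskai (as recorded, e.g., in Nielsen and Chuang~\cite{NC00} or Watrous~\cite{W18-TQI}).

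The main obstacle is really just the strong subadditivity step: it is a highly non-trivial theorem whose proof uses operator-convexity arguments (Lieb's concavity theorem) or, alternatively, monotonicity of relative entropy under completely positive trace-preserving maps. Since we are entitled to cite it, the only care needed is to make the algebraic reduction clean: identify $W$ as the ``extra'' system beyond what appears on the right-hand side, group $YZ$ into a single conditioning system, and apply the chain rule so that the leftover non-negative term is exactly a conditional mutual information of the required tripartite form. Once this is done, the inequality in the lemma follows in one line.
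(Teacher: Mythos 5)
Your proof is correct: the chain rule $\mi(X:WY\,|\,Z)=\mi(X:Y\,|\,Z)+\mi(X:W\,|\,YZ)$ does follow by pure cancellation from the definitions given in the paper, and the leftover term $\mi(X:W\,|\,YZ)\ge 0$ is exactly strong subadditivity for the tripartite split $(X,W,YZ)$. The paper states this lemma without proof as a standard fact from the cited references, and the reduction you give is precisely the canonical argument those references use, so there is nothing to add.
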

We may bound conditional mutual information as follows.
\begin{lemma}
\label{lem-mi-increase}
$\mi( X : WY \,|\, Z) ~\le~ 2 \entropy(W) + \mi( X : Y \,|\, Z)$.
\end{lemma}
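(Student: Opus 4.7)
The plan is to split the conditional mutual information on the left-hand side using the chain rule for quantum mutual information:
\[
\mi(X : WY \,|\, Z) \;=\; \mi(X : Y \,|\, Z) + \mi(X : W \,|\, YZ).
\]
This chain rule is an immediate consequence of the definition of conditional mutual information, and one can verify it by expanding both sides into a sum of von Neumann entropies and cancelling identical terms. With this reduction in hand, it suffices to prove that $\mi(X : W \,|\, YZ) \;\le\; 2\,\entropy(W)$.

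For the latter bound, I would rewrite the conditional mutual information as a difference of conditional entropies of the register $W$:
\[
\mi(X : W \,|\, YZ) \;=\; \entropy(W \,|\, YZ) \;-\; \entropy(W \,|\, XYZ).
\]
The Araki--Lieb triangle inequality implies that for any registers $W$ and $A$, we have $\lvert\entropy(W \,|\, A)\rvert \le \entropy(W)$. Applying it with $A = YZ$ yields $\entropy(W \,|\, YZ) \le \entropy(W)$, and applying it with $A = XYZ$ yields $-\entropy(W \,|\, XYZ) \le \entropy(W)$. Summing the two bounds gives $\mi(X : W \,|\, YZ) \le 2\,\entropy(W)$, which completes the argument.

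The only subtle point is the factor of~$2$: classically, one would use $\mi(X : W \,|\, YZ) \le \entropy(W \,|\, YZ) \le \entropy(W)$, since classical conditional entropies are non-negative. In the quantum setting, conditional entropy can be negative, so one must bound $-\entropy(W \,|\, XYZ)$ separately; Araki--Lieb is the natural tool to do so, at the cost of the factor of~$2$ that appears in the statement of the lemma. Other than invoking this standard inequality, the proof consists only of a direct manipulation of the definition of conditional mutual information.
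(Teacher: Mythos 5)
Your proof is correct. The paper states this lemma without proof, as a standard fact of quantum information theory, so there is no argument in the text to compare against; your derivation via the chain rule $\mi(X:WY\,|\,Z) = \mi(X:Y\,|\,Z) + \mi(X:W\,|\,YZ)$ followed by the bound $\mi(X:W\,|\,YZ) = \entropy(W\,|\,YZ) - \entropy(W\,|\,XYZ) \le 2\entropy(W)$ is the standard one and is what the authors presumably have in mind. One small attribution quibble: the upper bound $\entropy(W\,|\,YZ) \le \entropy(W)$ follows from subadditivity of von Neumann entropy, while it is only the lower bound $\entropy(W\,|\,XYZ) \ge -\entropy(W)$ that requires the Araki--Lieb triangle inequality. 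You correctly identify the essential quantum subtlety, namely that conditional entropy can be negative, which is exactly why the factor of~$2$ appears (and why it is needed in the paper's application, where $W$ is a freshly received $b$-qubit message that may be entangled with the rest of the registers).
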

The quantity simplifies if the register on which we condition is ``classical''.
\begin{lemma}
\label{lem-cq}
Let~$\sigma$ be a possible state of the registers~$XYZ$ given by
\[
\sigma \quad \eqdef \quad \sum_z \lambda_z \; \sigma_z^{XY} \tensor \density{z}^Z \enspace,
\]
where~$(\ket{z})$ is an orthonormal basis of the state space of register~$Z$, $\lambda$ is a probability distribution on this basis, and~$(\sigma_z)$ are possible states of the registers~$XY$. Then
\[
\mi(X : Y \,|\, Z)_\sigma \quad = \quad \expct_{z \sim \lambda} \mi(X : Y)_{\sigma_z}  \enspace.
\]
\end{lemma}

\subsection{Quantum communication complexity}
\label{sec-qcc}

We informally describe a two-party quantum communication protocol \emph{with shared entanglement\/} (also called an \emph{entanglement-assisted\/} two-party quantum communication protocol) for computing a bi-variate Boolean function~$f(x,y)$ of~$n$-bit inputs~$x,y$.
For a formal definition, we refer the reader to an article by Touchette~\cite{Touchette15-QIC}. In such a protocol, we have two parties,
Alice and Bob, each of whom gets an input in registers~$X$ and~$Y$, respectively. In the protocols we consider, the inputs are \emph{classical\/}, i.e., the joint quantum state in the input registers~$XY$ is diagonal in the basis~$( \ket{x,y} ~:~ x, y \in \set{0,1}^n )$. 
Alice and Bob's goal is to compute the value of the function on the pair of strings in the input registers by interacting with each other.

The protocol proceeds in some number~$m \ge 1$ of \emph{rounds\/}. At the cost of increasing the number of rounds by~$1$, we assume that Alice sends the message in the first round, after which the parties alternate in sending messages. Each party holds a \emph{work\/} register in addition to the input register. Initially, Alice has work register~$A_0 $, Bob has~$B_0$. We denote the work register with Alice at the end of round~$k \in [m]$ by~$A_{k}$ and that with Bob by~$B_{k}$.

The qubits in the work registers~$A_0 B_0$ are initialised to a fixed pure state that may be entangled across the partition across~$A_0$ and~$B_0$, but is independent of the inputs~$x,y$. This is called \emph{shared entanglement\/}.
(In the model \emph{without\/} shared entanglement, the registers~$A_0 B_0$ are initialised to~$\ancilla$.)
Suppose that Alice is supposed to send the message in the~$k$-th round, for some~$k \in [m]$. Alice applies an isometry controlled by her input register~$X$ to the work register~$A_{k - 1}$ to obtain registers~$A_k M_k$. She then sends the message register~$M_k$ to Bob. Bob's work register at the end of the~$k$-th round is then~$B_k \eqdef M_k B_{k-1}$. After the~$m$-th round (the last round), the recipient of the last message, say Bob, measures his work register~$B_k$, possibly controlled by his input register~$Y$, to produce the binary output of the protocol.

The length of a message is the number of qubits in the message register for that round. The \emph{entanglement-assisted communication complexity\/} of the protocol is the sum of the lengths of the~$m$ messages in it. We say the protocol computes the function~$f$ with success probability~$\alpha$ if for all inputs~$x,y$, the probability that the protocol outputs~$f(x,y)$ is at least~$\alpha$. The goal of the two parties is to compute the function while minimising the communication between themselves. The \emph{entanglement-assisted quantum communication complexity\/} of~$f$ is the minimum communication complexity of a quantum protocol with shared entanglement that computes~$f$ with success probability at least $2/3$.

We analyse a subtle variant of the \emph{conditional information loss\/} of two-party protocols, a notion introduced by Jain, Radhakrishnan, and Sen~\cite{JRS03-set-disjointness} (see Appendix~\ref{sec-jrs}). We call the variant \emph{conditional information leakage\/} to distinguish it from conditional information loss. This variant is implicit in Ref.~\cite{JRS03-set-disjointness}, and turns out to be the quantity of interest for us. We define this notion following the convention and notation given above. In particular, we assume that Alice sends the messages in the odd rounds and Bob sends the messages in the even rounds. Moreover, the only measurement in the protocol is the one for producing the output, so that the joint state of the two parties is pure for any fixed pair of inputs. Let~$\mu$ be a joint distribution over the input set~$\set{0,1}^n \times \set{0,1}^n$ and an auxiliary sample space. We initialise registers~$\hX \hY \hZ X Y Z $ to the canonical purification
\[
\sum_{x,y,z}  \sqrt{\mu(x,y,z)} \; \ket{xyz}^{\hX \hY \hZ} \ket{xyz}^{X Y Z}
\]
of the distribution, where~$Z$ corresponds to the auxiliary random variable.
We use the register labels~$X, Y, Z$ to also refer to the two input random variables~($X,Y$) and the auxiliary random variable~($Z$), respectively. We then run the two-party protocol~$\Pi$ using the input registers~$X,Y$ respectively, and additional work registers as described above. We imagine that the purification register~$\hX$ is given to Alice, the register~$\hY$ is given to Bob, and that the registers~$\hZ Z$ are held by a third party.

The \emph{conditional information leakage\/}~$\til( \Pi \,|\, XYZ)$ of the protocol~$\Pi$ is defined as
\[
\til( \Pi \,|\, XYZ) \quad \eqdef \quad \sum_{i \in [m], ~ i \text{ odd}} \mi( X : B_i Y \hY \,|\, Z) 
    + \sum_{i \in [m], ~ i \text{ even}} \mi( Y : A_i X \hX \,|\, Z) \enspace,
\]
where the registers are implicitly assumed to be in the state given by the protocol. Since Alice sends the messages in the odd rounds, and Bob in the even rounds, this quantity measures the cumulative information about the inputs ``leaked'' to the other party, over the course of the entire protocol.

\subsection{Quantum query complexity}
\label{sec-query-complexity}

For a thorough introduction to the quantum query model, see, for example, the lecture notes by de Wolf~\cite{deWolf19} and the survey by Ambainis~\cite{A18-query-complexity}. In this work, we study algorithms for computing a bi-variate Boolean function~$f(x,y)$ as above, using \emph{two\/} unitary operators~$\cO_x$ and~$\cO_y$ that provide access to the~$n$-bit inputs~$x$ and~$y$, respectively. For any~$z \in \set{0,1}^n$, the operator~$\mathcal{O}_z$ acts as~$\mathcal{O}_z \ket{i,b} = \ket{i,b\oplus z_i}$ on the Hilbert space~$\cH$ spanned by the orthonormal basis~$\{\ket{i,b}:i\in[n],b\in\{0,1\}\}$.
We call operators of the form~$\mathcal{O}_z$ an \emph{oracle\/}, and each application of such an operator a \emph{query\/}.

A query algorithm~$\cA$ with access to two oracles~$\cO_x$ and~$\cO_y$ is an alternating sequence of unitary operators~$U_0, V_1, U_1, V_2, U_2, V_3, U_3, \dotsc, V_t, U_t$, where the operators~$U_i$ act on a Hilbert space of the form~$\cH \tensor \cW$ and are independent of the inputs~$x,y$, and the~$V_i \in \set{\cO_x,\cO_y}$. The computation starts in a fixed state~$\ancilla \in \cH \tensor \cW$, followed by the sequence of unitary operators to get the final state~$U_t V_t \dotsb U_3 V_3 U_2 V_2 U_1 V_1 U_0 \ancilla$. Finally, we measure the first qubit in the standard basis to obtain the output~$\cA(x,y)$ of the algorithm. We say the algorithm computes~$f$ with success probability~$\alpha$ if for all inputs~$x,y$, we have~$\cA(x,y) = f(x,y)$ with probability at least~$\alpha$. 

As in the standard quantum query model, we focus on the number of applications of the operators~$\mathcal{O}_x$ and~$\mathcal{O}_y$ in an algorithm, and ignore the cost of implementing unitary operators that are independent of~$x$ and~$y$. The \emph{query complexity\/} of an algorithm is the number of queries made by the algorithm ($t$ in the definition above). The \emph{quantum query complexity\/} of a function~$f$ is the minimum query complexity of any quantum algorithm that computes~$f$ with probability at least~$2/3$.

\section{Set Disjointness on a Line}
\label{sec-linedisj}

\subsection{The problem and results}

The Set Disjointness problem~$\linedisj_{n,d}$ on a line was introduced recently 
by Le Gall and Magniez~\cite{LM18-diameter-congest-model} in the context 
of distributed computing.  It is a communication 
problem involving~$d + 1$ communicating parties, $\rA_0, \rA_1, \dotsc,
\rA_d$, arranged on the vertices of a path of length~$d$. The edges of
the path denote two-way quantum communication channels between the
players.
Parties~$\rA_0$ and~$\rA_d$ receive~$n$-bit inputs~$x, y \in
\set{0,1}^n$, respectively. The~$ d + 1$ parties also share an arbitrary 
entangled state as described in Section~\ref{sec-congest}.
The communication protocol proceeds in
rounds. In each round, parties~$\rA_{i - 1}$ and~$\rA_i$ may 
exchange~$b$ qubits in each direction, for each~$i \in [d]$, i.e., the 
\emph{bandwidth\/} of each communication channel is~$b$. The goal of
the parties is to determine if the sets~$x$ and~$y$ intersect or not.
I.e., they would like to compute the Set Disjointness function~$\disj_n(x,y) 
\eqdef \bigvee_{i = 1}^n (x_i \meet y_i)$. 

We are interested in the number of rounds required to solve~$\linedisj_{n,d}$.
We readily get a quantum protocol~$\Pi_d$ for this problem 
with~$\Order( \sqrt{nd} \,)$ rounds by following an observation due to
Zalka~\cite{Zalka99-unordered-search} on black-box algorithms that make 
``parallel'' queries. Let~$\Pi$ denote the optimal two-party
quantum communication protocol for Set Disjointness due to
Aaronson and Ambainis~\cite{AA03-spatial-search}.
In~$\Pi_d$, we partition the~$n$-bit inputs into~$d$ parts 
of length~$n/d$ each. Parties~$\rA_0$ and~$\rA_d$ then simulate~$\Pi$
on each of the~$d$ corresponding pairs of inputs independently.
The protocol~$\Pi$ runs in~$\sqrt{n/d}$ rounds with~$\Order(1)$ 
qubits of communication per instance of length~$n/d$, per round. So the total 
communication to or from~$\rA_0$ due to one round of the~$d$ runs
of~$\Pi$ is~$\Order(d)$. Since~$\Order(d)$ qubits can be 
transmitted across the path of length~$d$ in~$\Order(d)$ rounds of the 
multi-party protocol, the protocol~$\Pi_d$
simulates the~$d$ parallel runs of~$\Pi$ in~$\Order( \sqrt{nd} \,)$
rounds. Since~$\Pi$ finds an intersection with probability at
least~$3/4$ whenever there is one, and does not err when there is no 
intersection, the protocol~$\Pi_d$ also has the same correctness
probability.

Le Gall and Magniez observed that a lower bound of~$\Omega(
\sqrt{n}/ b )$ for the number of rounds follows from the~$\Omega( \sqrt{n} \,)$
lower bound due to Razborov~\cite{Razborov03-set-disjointness} on 
the quantum communication complexity of Set Disjointness in the 
two-party communication model. This is because 
two parties, Alice and Bob, may use any~$r$-round protocol for~$\linedisj_{n,d}$
to solve Set Disjointness with~$2 r b$ qubits of communication: Alice 
simulates~$\rA_0$ and Bob simulates the actions of the remaining 
parties~$\rA_1, \rA_2, \dotsc, \rA_d $.  An~$\Omega( d)$ lower bound is 
also immediate due to the need for communication between~$\rA_0$ and~$\rA_d
$.

Le Gall and Magniez devised a more intricate simulation of a
protocol for~$\linedisj_{n,d}$ by two parties, thereby obtaining a two-party protocol
for Set Disjointness. Using this, they obtained a round lower bound 
of~$\widetilde{\Omega}( \sqrt{nd} \,)$ for~$\linedisj_{n,d}$ when the 
bandwidth~$b$ of each communication channel (in each round) and
the local memory of the players~$\rA_1, \rA_2, \dotsc, \rA_{d-1}$ are
both~$\Order( \log n)$ qubits. We show that a similar simulation leads to an
unconditional round lower bound of~$\Omega(n d^2 / b)^{1/3}$ by studying the
conditional information leakage of the resulting two-party protocol (see Section~\ref{sec-qcc}).

\begin{theorem}
\label{thm-lb}
Any entanglement-assisted quantum communication protocol with error probability at most~$1/3$
for the Set Disjointness problem~$\linedisj_{n,d}$ on the line 
requires~$\Omega ( \sqrt[3]{n d^2 / b} \,)$ rounds.
\end{theorem}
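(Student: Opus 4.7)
The plan is to reduce an $r$-round distributed protocol on the line to a two-party quantum protocol for $\disj_n$, and then bound the conditional information leakage of that two-party protocol in a way that exploits the latency of the line.

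Given an $r$-round quantum distributed protocol $\Pi_L$ solving $\linedisj_{n,d}$ with bandwidth $b$, I first construct a two-party protocol $\Pi$ for $\disj_n$, in the style of Le~Gall and Magniez~\cite{LM18-diameter-congest-model}: Alice takes on the role of $\rA_0$ and Bob the roles of $\rA_1, \dotsc, \rA_d$, and in each round of $\Pi_L$ the two parties exchange the pair of $b$-qubit messages that would cross the edge $\{\rA_0, \rA_1\}$. The resulting two-party protocol $\Pi$ has $r$ rounds, total communication at most $2rb$ qubits, and solves $\disj_n$ with the same success probability as $\Pi_L$.

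The heart of the argument is to estimate the conditional information leakage $\til(\Pi \mid XYZ)$ of the protocol $\Pi$ under a suitable hard-instance distribution. A naive bound based only on message sizes gives both $\mi(X : B_i Y \hY \mid Z)$ and $\mi(Y : A_i X \hX \mid Z)$ at most $O(ib)$, and hence $\til(\Pi \mid XYZ) = O(r^2 b)$; combined with an $\widetilde{\Omega}(n)$ lower bound on $\til$ for $\disj_n$ extracted from Jain, Radhakrishnan, and Sen~\cite{JRS03-set-disjointness} (as outlined in the appendix referenced in Section~\ref{sec-qcc}), this alone only recovers the standard $r = \widetilde{\Omega}(\sqrt{n/b})$ bound. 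I aim instead for the refined estimate $\til(\Pi \mid XYZ) = \widetilde{O}(r^3 b / d^2)$, using the light-cone structure of $\Pi$: any effect of $y$ on $\rA_0$, and symmetrically of $x$ on $\rA_d$, must be carried by signals that traverse $d$ edges, so information about the far input that has accumulated at a ``deep'' intermediate node $\rA_k$ is throttled by the need to propagate through $k-1$ (respectively $d-k$) intermediate edges. Making this precise via the data-processing inequality (Lemma~\ref{lem-dpi}) together with the bound of Lemma~\ref{lem-mi-increase}, applied to a careful decomposition of $B_i$ along the chain $\rA_1, \dotsc, \rA_d$, and summing the resulting per-round estimates over $i \in [r]$, yields the improved bound. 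Combining it with the $\widetilde{\Omega}(n)$ lower bound gives $r^3 b / d^2 = \widetilde{\Omega}(n)$, and hence $r = \Omega(\sqrt[3]{n d^2 / b})$ after absorbing logarithmic factors.

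The main obstacle is the refined leakage bound. Without any restriction on the memory of the intermediate processors, Bob's simulated register $B_i$ can have arbitrarily large dimension, so one cannot fall back on the dimension bound $\mi(X : B_i \mid Z) \le 2 \log \dim(B_i)$. The bound must instead be extracted entirely from the delayed propagation of information along the line: although $ib$ qubits of messages reach Bob's side over $i$ rounds, information about $x$ that can have reached deep positions $\rA_k$ is limited because the relevant signals must first traverse $k-1$ intermediate edges, and a symmetric statement holds for the flow of information about $y$ back to $\rA_0$. Turning this two-directional light-cone intuition into a quantitative information-theoretic bound, with the $d^2$ in the denominator reflecting the separate delays in the two directions, is the technical crux.
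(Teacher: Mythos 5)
Your high-level strategy---simulate the line protocol by a two-party protocol, bound its conditional information leakage, and invoke the Jain--Radhakrishnan--Sen bound---is the right one, but both halves of your quantitative plan fail, and the failure points to the one idea the paper actually relies on. First, the JRS result gives a leakage lower bound of $\Omega(n/m)$ where $m$ is the number of rounds of the \emph{two-party} protocol, not $\widetilde{\Omega}(n)$; with your naive simulation ($m = r$ rounds) it only yields $\til(\Pi \mid XYZ) = \Omega(n/r)$, and even granting your hoped-for upper bound of $\widetilde{\Order}(r^3 b/d^2)$ this would give $r = \widetilde{\Omega}\big((nd^2/b)^{1/4}\big)$, not the claimed cube root. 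Second, the refined leakage bound $\widetilde{\Order}(r^3b/d^2)$ is not provable for the naive simulation: the leakage term $\mi(X : B_i Y \hY \mid Z)$ measures the information about $x$ held by \emph{all} of $\rA_1, \dotsc, \rA_d$, and $\rA_1$ is adjacent to $\rA_0$, so no light-cone delay protects it. A line protocol in which $\rA_0$ simply streams $b$ fresh bits of $x$ to $\rA_1$ each round has $\mi(X : B_i Y\hY \mid Z) = \Theta(ib)$ and hence total leakage $\Theta(r^2 b)$ under your simulation; the latency of the line is invisible to this quantity.

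The paper gets both factors of $d$ from a different source: a block simulation (Lemma~\ref{lem-simulation}) in which Alice and Bob each simulate a triangular space-time slice of the line protocol, compressing every $d$ consecutive rounds of $\Pi_d$ into just $2$ rounds of the two-party protocol, so that $m = 2\lceil r/d\rceil$. The per-round leakage bound is then exactly the ``naive'' cut bound you dismiss ($\mi(X : D_t \mid Z) \le 2tb$, since only $2b$ qubits cross the edge between $\rA_0$ and $\rA_1$ per round of $\Pi_d$; Lemma~\ref{lem-cil-pid}), but the sum now has only $\Order(r/d)$ terms, giving $\til(\Pi\mid XYZ) = \Order(r^2 b/d)$, while the JRS bound strengthens to $\Omega(n/m) = \Omega(nd/r)$. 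Combining these yields $r^3 \ge \Omega(nd^2/b)$. So the missing ingredient is not a sharper information-flow estimate but the round compression of the reduction itself.
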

This bound dominates the straightforward bound of~$\Omega( \sqrt{n}/ b)$
mentioned above when~$d \ge \sqrt[4]{n} / b$, i.e., when~$d \ge 
\sqrt[4]{n} / \log n$ when~$b \eqdef \log n$. However, we do not know if either
bound is achievable in the respective parameter regimes. We study the
optimality of the above bound via a related query model in Section~\ref{sec-d-query-algorithms}.

Using the reduction from~$\linedisj_{n,d}$ to the problem of computing the diameter
described in the proof of Theorem~1.3 in Ref.~\cite{LM18-diameter-congest-model}, 
we get a new lower bound for quantum distributed protocols for the diameter problem
in the CONGEST model. 
\begin{corollary}\label{cor-diam}
Any distributed protocol for computing the diameter~$\diam$ of~$p$-node 
networks with congestion $\Order(\log p)$ in the quantum CONGEST model (possibly with 
shared entanglement), requires  $\widetilde{\Omega}(\sqrt[3]{p \diam^2} \,)$ rounds.
\end{corollary}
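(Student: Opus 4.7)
The plan is to reduce $\linedisj_{n,\diam}$ to the diameter problem on an $\Theta(n)$-node network of diameter $\Theta(\diam)$, and then apply Theorem~\ref{thm-lb}. The reduction is essentially the one already used by Le~Gall and Magniez in the proof of Theorem~1.3 of Ref.~\cite{LM18-diameter-congest-model}; the point is that combining it with the strengthened round lower bound above immediately yields the corollary, whereas previously only the weaker bound $\widetilde{\Omega}(\sqrt{n}+\diam)$ could be plugged in.

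Concretely, from $n$-bit Set Disjointness inputs $x,y$, one constructs a graph $G_{x,y}$ made of two gadgets $H_x$ and $H_y$ on $\Theta(n)$ vertices each and of constant diameter, joined by a path of length $\diam$. The construction is arranged so that the diameter of $G_{x,y}$ takes one of two distinct values $\diam + c_1$ or $\diam + c_2$ depending on $\disj_n(x,y)$, so that knowing the diameter exactly determines the Set Disjointness answer. Given an $R$-round quantum distributed protocol $\mathcal{P}$ for diameter with bandwidth $\Order(\log n)$, I would build an $R$-round protocol for $\linedisj_{n,\diam}$ as follows: $\rA_0$ constructs $H_x$ locally and simulates $\mathcal{P}$ on all of its vertices, $\rA_\diam$ does the same with $H_y$, and each intermediate party $\rA_i$ simulates $\mathcal{P}$ on the corresponding internal vertex of the connecting path. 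Messages between simulated neighbouring vertices that live at distinct parties are transmitted through the quantum channels of the line, using $\Order(\log n)$ qubits per channel per round, matching the bandwidth. After $R$ rounds, the designated output party knows the diameter, and hence $\disj_n(x,y)$.

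Applying Theorem~\ref{thm-lb} with $d = \diam$ and $b = \Order(\log n)$ then gives $R = \Omega(\sqrt[3]{n\diam^2 / \log n}) = \widetilde{\Omega}(\sqrt[3]{n\diam^2} \,)$, as claimed. The only delicate step is the bookkeeping in the gadget construction: the gadgets must have $\Theta(n)$ vertices (so that the lower bound is stated in terms of the diameter-problem input size), have constant diameter, be locally constructible by $\rA_0$ and $\rA_\diam$ without extra communication, and encode Set Disjointness through a single constant-sized shift in the overall diameter. Since all of these properties are already established in Ref.~\cite{LM18-diameter-congest-model}, the corollary follows with no further analytical work.
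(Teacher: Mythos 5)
Your proposal is correct and follows essentially the same route as the paper, which likewise proves the corollary by invoking the $\linedisj_{n,\diam}$-to-Diameter reduction from the proof of Theorem~1.3 of Le~Gall and Magniez~\cite{LM18-diameter-congest-model} and then substituting the strengthened round lower bound of Theorem~\ref{thm-lb} with~$d = \diam$ and~$b \in \Order(\log n)$. The only cosmetic difference is that you spell out the simulation of the diameter protocol on the line explicitly, whereas the paper leaves the gadget and simulation details entirely to the cited reference.
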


\subsection{Overview of the proofs}
\label{sec-overview}

We begin by giving an overview of the proof of Theorem~\ref{thm-lb}. It rests on a simulation of a protocol for~$\linedisj_{n,d}$ by a two-party protocol for Set Disjointness similar to one designed by Le Gall and Magniez~\cite[Theorem~6.1]{LM18-diameter-congest-model}.
(In fact, the simulation works for any multi-party protocol over the path of length~$d$ that computes 
some bi-variate function~$g(x,y)$ of the inputs given to~$\rA_0$ and~$\rA_d$.)
The idea underlying the  simulation is the following.
Suppose we have a protocol~$\Pi_d$ for the problem~$\linedisj_{n,d}$.
In the two-party protocol~$\Pi$, Alice begins by holding the
registers used by parties~$\rA_0, \rA_1, \dotsc, \rA_{d - 1}$.
She then simulates all the actions---local operations and
communication---of the parties~$\rA_0, \rA_1, \dotsc, \rA_{d - 1}$ 
from the first round in~$\Pi_d $, except for the communication 
between~$\rA_{d - 1}$ and~$\rA_d $. This is possible because these actions 
do not depend on the input~$y$ held by~$\rA_d $. She can continue simulating the actions 
of~$\rA_0, \rA_1,
\dotsc, \rA_{d - 2}$ from the second round, except the communication
between~$\rA_{d - 2}$ and~$\rA_{d - 1} $, as these do not depend on the
message from~$\rA_d$ from the first round in~$\Pi_d $. Continuing this way,
Alice can simulate the actions of~$\rA_0, \rA_1, \dotsc,
\rA_{d - i}$ from round~$i$ of~$\Pi_d $, except the communication
between~$\rA_{d - i}$ and~$\rA_{d - i + 1} $, for all~$i \in [d]$, all 
in one round of~$\Pi$. These actions constitute Alice's local operations
in the first round of~$\Pi$.

Alice then sends Bob the local memory used by 
parties~$\rA_1, \dotsc, 
\rA_{d - 1}$ in~$\Pi_d $, along with the qubits sent by~$\rA_{i-1}$
to~$\rA_i$ in round~$i$, for each~$i \in [d]$. (Alice retains the 
input~$x$ and the memory used by party~$\rA_0 $.) This constitutes the 
first message from Alice to Bob in~$\Pi$.

Given the first message, Bob can simulate the remaining 
actions of~$\rA_1, \rA_2, \dotsc, \rA_d$ from the first~$d$ rounds 
of~$\Pi_d$, except for the communication from~$\rA_1$ to~$\rA_0$.
These constitute his local operations in the second round of~$\Pi$.
He then sends Alice the qubits sent by~$\rA_1$ to~$\rA_0$ in round~$d$
of~$\Pi_d$ along with the local memory used by the parties~$\rA_i $,
for~$i \in [d-1]$. (Bob retains the input~$y$ and the local memory used
by party~$\rA_d$.) This constitutes the second message in~$\Pi$.

In effect, the simulation implements the first~$d$ rounds of~$\Pi_d$ 
in two rounds of~$\Pi$ (see Figure~\ref{fig-simulation}).
The same idea allows Alice and Bob to simulate
the rest of the protocol~$\Pi_d$ while implementing each successive block
of~$d$ rounds of~$\Pi_d$ in two rounds of~$\Pi$, with communication per
round of the order of~$d(b + s)$, where~$b$ is the bandwidth of the
communication channels in~$\Pi_d $, and~$s$ is a bound on the number of
qubits of local memory used by any of the parties~$\rA_1, \dotsc, \rA_{d-1}$.
Building on the detailed description of protocols on the line in 
Section~\ref{sec-line-protocols}, we describe the simulation formally in 
Section~\ref{sec-simulation}, and show the following.
\begin{lemma}
\label{lem-simulation}
Given any~$r$-round entanglement-assisted quantum protocol~$\Pi_d$ for~$\linedisj_{n,d}$ over 
communication channels with bandwidth~$b$ in which each party uses local 
memory at most~$s$, there is an entanglement-assisted two-party quantum protocol~$\Pi$ for Set 
Disjointness~$\disj_n$ that has~$2 \ceil{r/d}$ rounds, total 
communication of order~$r(b + s)$, and has the same probability of success. Further,
if the protocol~$\Pi_d$ does not use shared entanglement, the protocol~$\Pi$ also does
not.
\end{lemma}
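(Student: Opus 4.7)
The plan is to formalise the pipelined simulation sketched above. First I fix notation for $\Pi_d$: let $L_i^{(k)}$ denote the local register of $\rA_i$ at the end of round $k$ (of size at most $s$ for $1 \le i \le d-1$), and $M_{i \to j}^{(k)}$ the size-$b$ message register sent from $\rA_i$ to $\rA_j$ at the end of round $k$. The round-$k$ isometry of $\rA_i$ acts on $L_i^{(k-1)}$ together with $M_{i-1 \to i}^{(k-1)}$ and $M_{i+1 \to i}^{(k-1)}$, producing $L_i^{(k)}$ together with $M_{i \to i-1}^{(k)}$ and $M_{i \to i+1}^{(k)}$; the boundary parties $\rA_0$ and $\rA_d$ additionally condition on their input registers.

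The central observation is a lightcone property: the round-$k$ operation of $\rA_i$ can depend on $y$ only if information could have propagated $d-i$ edges in $k-1$ rounds, i.e.\ only if $i \ge d-k+1$; symmetrically it depends on $x$ only if $i \le k-1$. Within any block of $d$ consecutive rounds (indexed $k = 1, \ldots, d$ within the block), the operations at positions $(i,k)$ with $i + k \le d$ form an upper triangle that is $y$-independent and can be executed by Alice alone, while those with $i + k \ge d + 1$ form a lower triangle that, once the upper triangle has been executed, depends on $y$ but no longer on $x$ (beyond what is encoded in the output of the upper triangle) and can therefore be executed by Bob.

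Next I would describe $\Pi$ inductively. The invariant at the start of the $j$-th block is that Alice holds $x$, the registers $L_0^{((j-1)d)}, \ldots, L_{d-1}^{((j-1)d)}$, and all messages sent in round $(j-1)d$ whose recipients lie in $\{\rA_0, \ldots, \rA_{d-1}\}$, while Bob holds the symmetric data for the $\rA_d$ side; for $j = 1$ the invariant holds trivially on the initial ancillas. In her round of $\Pi$, Alice applies the upper-triangle operations in order of non-decreasing $k$, each of which reads only registers she already holds or that she produced earlier in the same round. She then sends Bob the current contents of $L_1, \ldots, L_{d-1}$ together with the $d$ diagonal forward messages $M_{d-k \to d-k+1}^{((j-1)d + k)}$ for $k = 1, \ldots, d$. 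Bob symmetrically applies the lower-triangle operations and returns the analogous registers to re-establish the invariant for block $j+1$. Each round of $\Pi$ transmits at most $(d-1)s + db = \Order(d(b+s))$ qubits.

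Summing over $\ceil{r/d}$ blocks (padding the last block by identity if $d \nmid r$) yields $2\ceil{r/d}$ rounds of $\Pi$ and total communication $\Order(r(b+s))$. Since $\Pi$ merely re-partitions the registers of $\Pi_d$ between the two parties without modifying any isometry, the joint state at the end of $\Pi$ is identical to that of $\Pi_d$ after $r$ rounds; Bob holds the register on which $\rA_d$ performs its measurement, so $\Pi$ outputs $\disj_n(x,y)$ with exactly the success probability of $\Pi_d$. The principal obstacle is the bookkeeping at block boundaries: verifying that all in-flight messages accompany each transmission so that Alice's upper triangle and Bob's lower triangle assemble into the genuine state of $\Pi_d$ at the end of the block, and so that the invariant is correctly re-established for the subsequent block.
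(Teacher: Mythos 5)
Your construction is essentially the paper's: the same triangular space-time decomposition of each block of $d$ rounds, the same invariant on the registers held at block boundaries, and the same counts of $2\ceil{r/d}$ rounds and $\Order(d(b+s))$ qubits per round. However, the enumeration of what Alice actually transmits is incomplete, and this is exactly the block-boundary bookkeeping that you flag as the remaining obstacle and leave unverified. After Alice's sweep of block $j$ (write $t=(j-1)d$), party $\rA_l$ for $l\in[d-1]$ has been simulated through round $t+d-l$, and there are \emph{three} registers associated with it: its local register $L_l^{(t+d-l)}$, the in-flight message $M_{l\to l+1}^{(t+d-l)}$ not yet delivered to $\rA_{l+1}$, and the message $M_{l-1\to l}^{(t+d-l)}$ it has already \emph{received} from $\rA_{l-1}$ (which Alice simulates one round further). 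Bob must apply $\rA_l$'s round-$(t+d-l+1)$ isometry, which by your own setup acts on $L_l^{(t+d-l)}$, $M_{l-1\to l}^{(t+d-l)}$, and $M_{l+1\to l}^{(t+d-l)}$. The last of these Bob produces himself, but $M_{l-1\to l}^{(t+d-l)}$ is neither among the local registers $L_1,\dotsc,L_{d-1}$ you send (which you fix at size $s$ each, giving your count $(d-1)s+db$) nor among the $d$ diagonal messages $M_{d-k\to d-k+1}^{(t+k)}$, which all satisfy $\mathrm{sender}+\mathrm{round\ offset}=d$ whereas this register sits on the diagonal $\mathrm{sender}+\mathrm{offset}=d-1$. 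As written, Bob cannot apply any lower-triangle isometry of an intermediate party. The paper's Eq.~(\ref{eq-A-registers}) makes the correct inventory explicit: each intermediate party contributes a triple $R_{l-1,t+d-l}\,A_{l,t+d-l}\,R_{l,t+d-l}$, i.e.\ the received message and the in-flight message both accompany the local memory.

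The repair is routine: include the $d-1$ registers $M_{l-1\to l}^{(t+d-l)}$ in Alice's message (and the symmetric registers in Bob's), which adds at most $(d-1)b$ qubits per round and leaves the $\Order(d(b+s))$ bound, hence the lemma, intact. Two smaller points: your claim that the upper triangle is ``$y$-independent'' is literally true only in the first block, since from block two onward every register may already carry information about $y$; what actually drives the induction is the register-availability invariant you state, so the argument should be phrased in those terms. Finally, the paper has $\rA_0$ (hence Alice) perform the final measurement; placing it with Bob is a harmless change of convention, but it is inconsistent with your own invariant under which Bob returns his non-input registers at the end of every block.
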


The communication required by a~$k$-round bounded-error two-party
protocol for Set Disjointness is~$\Omega( n/(k \log^8 k) )
$~\cite[Theorem~A]{BGKMT18-bounded-round-disj}. This gives us the
lower bound of~$\widetilde{\Omega}( \sqrt{nd} \,)$
due to Le Gall and Magniez on the number of rounds~$r$ in~$\Pi_d$, 
when~$b + s$ is of order~$\log n$.
More precisely, the bound with the logarithmic factors is
\[
\Omega \left( \frac{ \sqrt{nd}}{ (\log n)^{1/2} \log^4 \frac{n}{ d \log n} }
\right) \enspace.
\]

In fact, in the case the protocol~$\Pi_d$ does not use shared entanglement,
we may derive an unconditional lower bound on the number of
rounds from the same reduction, one that holds without any
restriction on the local memory used by the parties in~$\Pi_d$. This is
because the state of the registers of any party~$\rA_i$, with~$i \in [d-1]$, 
in an~$r$-round protocol without entanglement has support on a fixed subspace of dimension 
at most~$2^{4br}$, independent of the inputs, at any moment in the protocol.
This follows from an argument due to 
Yao~\cite{Yao93-quantum-circuit-complexity}, by considering a two party
protocol obtained by grouping all parties except~$\rA_i$ together (see Appendix~\ref{sec-yao93}). So
the state of party~$\rA_i$ at any point in the protocol can be mapped to 
one over~$4br$ qubits. Using this for the bound~$s$ on the local
memory, bandwidth~$b \in \Order( \log n)$, and the same reasoning as before, we 
get a lower bound of~$\widetilde{\Omega} (n d)^{1/3}$ on the number of 
rounds~$r$ in~$\Pi_d$. 
The precise expression for the bound with the logarithmic terms is
\[
\Omega \left( \frac{ (nd)^{1/3} }{ (\log n)^{1/3} \log^{8/3} \frac{ n}{ 
d^2 \log n } } \right) \enspace.
\]

We refine the analysis further to obtain Theorem~\ref{thm-lb},
by appealing to an
information-theoretic argument. The key insight is that regardless of
the size of the local memory maintained by the parties~$\rA_i$, for~$i
\in [d-1]$, 
the new \emph{information\/} they get about either input~$x$ or~$y$ in one 
round is bounded by~$b$, the length of the message from~$\rA_0$
or~$\rA_d$, respectively. Thus, the total information contained in the memory
and messages of these parties about the inputs may be bounded by~$rb$ at
any point in the protocol (see Lemma~\ref{lem-cil-pid}). This carries over to the information
contained in the messages between Alice and Bob in the two-party
protocol~$\Pi$ derived from~$\Pi_d $. The conditional information leakage of the
two-party protocol may then be bounded by~$rbm$, where~$m$ is the number
of rounds in~$\Pi$ (for suitable distributions over the inputs).
\begin{lemma}
\label{lem-cil-pi}
Let~$XYZ$ be jointly distributed random variables such that~$X, Y \in \set{0,1}^n$, and~$X$ and~$Y$ are independent given~$Z$. The conditional information leakage of the two-party protocol~$\Pi$ for Set Disjointness~$\disj_n$ mentioned in Lemma~\ref{lem-simulation} is bounded as~$\til( \Pi \,|\, XYZ) \in \Order(r^2 b / d)$.
\end{lemma}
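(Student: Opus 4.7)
The plan is to bound the conditional information leakage round by round. The key ingredients will be: the isometric equivalence between Bob's work register~$B_i$ in~$\Pi$ and the joint state of parties~$\rA_1, \dotsc, \rA_d$ together with the channels connecting them at an appropriate moment in the run of~$\Pi_d$, and the~$\Order(rb)$ bound from Lemma~\ref{lem-cil-pid} on the information those parties carry about the distant input. Since~$\Pi$ has~$m = 2 \ceil{r/d} \in \Order(r/d)$ rounds, summing an~$\Order(rb)$ per-round bound will yield the claimed~$\Order(r^2 b/d)$. As a first reduction, I would expand~$\til(\Pi \,|\, XYZ)$ and, for each odd~$i$, apply the chain rule to write
\[
\mi(X : B_i Y \hY \,|\, Z) \quad = \quad \mi(X : Y \hY \,|\, Z) \,+\, \mi(X : B_i \,|\, Y \hY Z),
\]
noting that the first summand vanishes: the conditional independence~$X \perp Y \,|\, Z$ combined with~$\hY$ being a coherent copy of the classical~$Y$ leaves no extra information about~$X$. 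Even rounds are handled symmetrically with~$\mi(Y : A_i \,|\, X \hX Z)$.

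Next I would identify~$B_i$ with the corresponding state in~$\Pi_d$. By the construction of the simulation in Section~\ref{sec-simulation}, after round~$i$ of~$\Pi$ Bob has only applied input-controlled local isometries to the message he just received (combined with his prior state), so~$B_i$ is isometrically related to the joint state of~$\rA_1, \dotsc, \rA_d$ and their incident communication channels at a suitable moment in the run of~$\Pi_d$, via an isometry acting trivially on~$Y, \hY, Z$. Because conditional mutual information is isometry-invariant, $\mi(X : B_i \,|\, Y \hY Z)$ equals the same quantity evaluated on the~$\Pi_d$-state. Lemma~\ref{lem-cil-pid} bounds the information that~$\rA_1, \dotsc, \rA_{d - 1}$ and their channels carry about~$X$ given~$Y$ by~$\Order(rb)$; the contribution of~$\rA_d$ is also~$\Order(rb)$, because, conditioned on~$Y$, the state of~$\rA_d$ is obtained by an isometry from the at most~$rb$ qubits of traffic it has received from~$\rA_{d - 1}$ over the entire protocol (Lemma~\ref{lem-mi-increase} then absorbs this register with a penalty of twice its qubit count). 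Together these estimates yield~$\mi(X : B_i \,|\, Y \hY Z) \in \Order(rb)$ for every odd~$i$, and analogously for every even~$i$.

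Summing over the~$m \in \Order(r/d)$ rounds then gives the claimed~$\til(\Pi \,|\, XYZ) \in \Order(r^2 b / d)$. The most delicate step will be making the isometric identification of~$B_i$ rigorous: one must pin down which moment of~$\Pi_d$ corresponds to the end of round~$i$ of~$\Pi$ (since each round of~$\Pi$ implements a block of~$d$ rounds of~$\Pi_d$, shifted by one between odd and even rounds of~$\Pi$), and verify that the input-controlled portion of Bob's simulation does not act on the conditioning registers~$Y, \hY, Z$. Once this alignment is in place, the rest reduces to routine bookkeeping with the chain rule and Lemmas~\ref{lem-cil-pid} and~\ref{lem-mi-increase}.
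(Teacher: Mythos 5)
Your proposal is correct and follows essentially the same route as the paper: identify Bob's (resp.\ Alice's) holdings at the end of each round of~$\Pi$ with the state~$D_{kd}$ (resp.~$C_{kd}$) of~$\Pi_d$ via an input-controlled isometry, invoke Lemma~\ref{lem-cil-pid}, and sum over the~$\Order(r/d)$ rounds. The only superfluous steps are the chain-rule splitting of~$\mi(X : B_i Y \hY \,|\, Z)$ and the separate treatment of~$\rA_d$'s registers---the register~$D_t$ in Lemma~\ref{lem-cil-pid} already includes everything held by~$\rA_1, \dotsc, \rA_d$ together with~$Y$ and~$\hY$, so that lemma applies directly.
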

We derive this as Corollary~\ref{cor-cil-pi} in Section~\ref{sec-cil}.

We now appeal to the following result due to Jain \etal~\cite{JRS03-set-disjointness} on the conditional information leakage of bounded-round protocols for Set Disjointness. This result is implicit in the proof of the~$\Omega(n / m^2)$ lower bound on the communication required by~$m$-round quantum protocols for Set Disjointness. (See Appendix~\ref{sec-jrs} for the details, including the significance of the auxiliary random variable~$Z$.)
\begin{theorem}[Jain, Radhakrishnan, Sen~\cite{JRS03-set-disjointness}]
\label{thm-jrs}
There is a choice of distribution for~$XYZ$ such that~$X, Y \in \set{0,1}^n$, the random variables~$X$ and~$Y$ are independent given~$Z$, and for any bounded-error entanglement-assisted two-party quantum communication protocol~$\Gamma$ for Set Disjointness~$\disj_n$ with~$m$ rounds, the conditional information leakage~$\til( \Gamma \,|\, XYZ)$ is at least~$\Omega(n/m)$.
\end{theorem}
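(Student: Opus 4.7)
The plan is to combine a standard hard-distribution construction for Set Disjointness with a round-by-round information accounting argument, where the per-round bound ultimately comes from the quantum substate theorem of Jain, Radhakrishnan, and Sen. I would first fix the following distribution. Let $Z = (K, V)$ where $K$ is uniform in $[n]$ and, conditioned on $K$, the pairs $(V_i)_{i \neq K}$ are i.i.d.\ uniform on $\{(0,0),(0,1),(1,0)\}$; set $(X_i, Y_i) = V_i$ for $i \neq K$ and take $(X_K, Y_K)$ to be independent uniform bits. Then $X$ and $Y$ are independent given $Z$, with constant probability the inputs are disjoint and with constant probability they intersect at position $K$, so any bounded-error protocol must distinguish these two cases and, in particular, must end up informing Bob about $X_K$ (and Alice about $Y_K$) at the coordinate $K$.

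Next I would carry out a direct-sum style reduction. Using the independence of the coordinates of $X$ given $Z$ together with the chain rule and Lemma~\ref{lem-dpi}, each summand $\mi(X : B_i Y \hY \,|\, Z)$ in $\til(\Gamma \,|\, XYZ)$ decomposes (up to constants) as $\sum_k \mi(X_k : B_i Y \hY \,|\, Z)$, and similarly for the Bob-to-Alice summands. Averaging over $K$, I reduce the problem to lower bounding the conditional information leakage of the same protocol $\Gamma$ restricted to a single uniform coordinate, viewed as a protocol that must produce the logical AND of two independent uniform input bits with constant advantage (this works because all other coordinates are sampled from a distribution whose AND is identically $0$, so $\disj_n$ restricted to this embedding is exactly the single-coordinate AND). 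Thus it suffices to show that any $m$-round bounded-error quantum protocol for two-bit AND under the uniform product input distribution has conditional information leakage $\Omega(1/m)$, and summing over $K$ then gives the claimed $\Omega(n/m)$ bound.

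The single-coordinate bound is the technical core, and this is where I expect the main obstacle. The approach is round elimination via the quantum substate theorem: if after some round $i$ the quantity $\mi(X_K : B_i Y \hY \,|\, Z)$ is at most $\varepsilon$, then by the average encoding / substate theorem one can approximate Bob's conditional state on $X_K$ by a fixed, $X_K$-independent state up to trace distance $\Order(\sqrt{\varepsilon})$, and symmetrically for Alice. If the total leakage were $o(1/m)$, then at least one round contributes $o(1/m^2)$, and iterating the substitution across all $m$ rounds (tracking the accumulated trace-distance errors round by round) would yield a protocol whose output distribution is within $o(1)$ of one that is independent of $(X_K, Y_K)$, contradicting the constant advantage needed to compute AND. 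The delicate part is keeping the round-by-round perturbations from blowing up: each invocation of the substate theorem costs a square-root loss, and one must carefully choose the truncation thresholds and use the $m$-round structure so that the compounded error remains $o(1)$, which is precisely the argument worked out in~\cite{JRS03-set-disjointness} in the context of their $\Omega(n/m^2)$ communication bound. Putting the single-coordinate bound together with the direct-sum reduction completes the proof.
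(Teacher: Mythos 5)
Your overall architecture (direct sum over coordinates plus an $\Omega(1/m)$ round-elimination bound for two-bit AND via the substate/average-encoding theorem) is the same as the argument of Jain, Radhakrishnan, and Sen that the paper invokes, and your sketch of the single-coordinate step is essentially right. The problem is the hard distribution, and it is fatal to the direct-sum step. In your distribution, $Z=(K,V)$ reveals every coordinate except the $K$-th, so conditioned on $Z$ the only randomness left in $X$ is the single bit $X_K$; hence $\entropy(X\,|\,Z)=1$ and every summand $\mi(X: B_i Y\hY\,|\,Z)$ is at most $1$, so $\til(\Gamma\,|\,XYZ)\le m$ for \emph{every} protocol. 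This contradicts the claimed $\Omega(n/m)$ bound whenever $m \ll \sqrt{n}$, so no argument can recover the theorem from this distribution. Concretely, your claimed decomposition $\mi(X: B_i Y\hY\,|\,Z)\approx\sum_k \mi(X_k: B_i Y\hY\,|\,Z)$ is an average rather than a sum here: $\mi(X_k: B_i Y\hY\,|\,Z)$ vanishes on the event $K\ne k$, so each term carries a $1/n$ factor and the direct sum yields only $\Omega(1/m)$, not $\Omega(n/m)$.

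The fix, which is what the cited work actually does, is to make \emph{all} $n$ coordinates simultaneously random given $Z$ while keeping $X$ and $Y$ conditionally independent: for each $i$ independently let $Z_i$ be a uniform ``side'' bit, and set $Y_i=0$ with $X_i$ uniform when $Z_i$ points to Alice, and $X_i=0$ with $Y_i$ uniform when $Z_i$ points to Bob. Then superadditivity (the chain rule plus conditional independence of the coordinates) gives $\mi(X: B_i Y\hY\,|\,Z)\ge\sum_k \mi(X_k: B_i Y\hY\,|\,Z)$ with $n$ genuinely contributing terms, each of which is lower bounded by the conditional information loss of a derived protocol for AND (the players embed their AND inputs at coordinate $k$ and privately sample the remaining coordinates consistently with $Z$); the $\Omega(1/m)$ bound per coordinate then sums to $\Omega(n/m)$. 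Note also that correctness at a planted intersection is not obtained by putting an intersecting pair into the distribution (which would break conditional independence) but by using the fact that the protocol must be correct on all inputs, including those obtained by embedding $(1,1)$ at coordinate $k$. Finally, your round-elimination sketch should control the contribution of \emph{every} round (via Cauchy--Schwarz, $\sum_i\sqrt{\varepsilon_i}\le\sqrt{m\sum_i\varepsilon_i}$), not just locate one cheap round as you suggest.
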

Since the number of rounds~$m$ in the two-party protocol~$\Pi$ is at most~$2 \ceil{r/d}$, we conclude the~$\Omega( n d^2 / b)^{1/3}$ lower bound stated in Theorem~\ref{thm-lb}.

Corollary~\ref{cor-diam} follows by combining the lower bound for~$\linedisj_{n,d}$ and the
reduction from~$\linedisj_{n,d}$ to the Diameter Problem
described in the proof of Theorem~1.3 in Ref.~\cite{LM18-diameter-congest-model}.
In more detail, the reduction is based on a construction due to Abboud, Censor-Hillel, and Khoury~\cite{ACHK16-sparse-networks}. It involves a network~$G_{n,d}(x,y)$ in which the number of nodes is determined by~$n$ and~$d$, but the edges may also depend on the inputs~$x,y$ to~$\linedisj_{n,d}$. Given an instance of~$\linedisj_{n,d}$, the~$ d + 1$ parties~$(\rA_i)$ locally construct parts of the network~$G_{n,d}(x,y)$. The parts are such that each party holds a disjoint subset of the vertices of~$G_{n,d}(x,y)$, and each edge of~$G_{n,d}(x,y)$ is between vertices held either by adjacent parties or by the same party. Only the edges between the vertices held by~$\rA_0$ may depend on the input~$x$ given to it, and only the edges between the vertices held by~$\rA_d$ may depend on the input~$y$ given to it. The remaining edges are determined by~$n,d$. Given an algorithm for the Diameter Problem, the parties~$(\rA_i)$ are then able to simulate it on the graph~$G_{n,d}(x,y)$. In particular, these observations imply that shared entanglement in the network~$G_{n,d}(x,y)$ of the type described in Section~\ref{sec-congest} translates to shared entanglement between the parties~$(\rA_i)$ of the same type. We refer the reader to Refs.~\cite{ACHK16-sparse-networks,LM18-diameter-congest-model} for the remaining details of the reduction.

\subsection{Formal description of protocols on the line}
\label{sec-line-protocols}

In order to establish the lemmas stated in Section~\ref{sec-overview}, we
introduce some conventions and notation associated with multi-party protocols 
on the line of the sort we study for~$\linedisj_{n,d}$. By using unitary 
implementations of measurements, we assume that all the local
operations in the protocol, except the final measurement to obtain the outcome
of the protocol, are unitary. We also assume that the input
registers~$X$ with~$\rA_0$ and~$Y$ with $\rA_d$ are read-only. I.e., the
input registers may only be used as control registers during the protocol,
and are retained by the respective parties throughout.

For ease of exposition, we use subscripts on the registers held by
all the parties to implicitly specify the state of the register and
the party which last modified the state of the register. 
At the beginning of round~$t+1$, for~$t \in \set{ 0, 1, \dotsc, r - 1}$,
party~$\rA_0$ holds registers~$X A_{0, t} \, L_{1, t} $, party~$\rA_d$
holds registers~$R_{d - 1, t} \, A_{d, t} \, Y$, and for~$i \in [d - 1]$, 
party~$\rA_i$ holds registers~$R_{i - 1, t} \, A_{i, t} \, L_{i + 1, t}
$.  The registers~$L_{i,t}$ and~$R_{i,t} $, for~$i \in [0,d]$
and~$t \in [0,r]$, all have~$b$ qubits. Except in the first round, the 
first subscript at the beginning of the round, say~$i$, indicates that
party~$\rA_i$ held the register in the previous round, and sent the
register to the neighbour that holds it in the current round.

\begin{figure}[t]
\centering
\includegraphics[width=390pt]{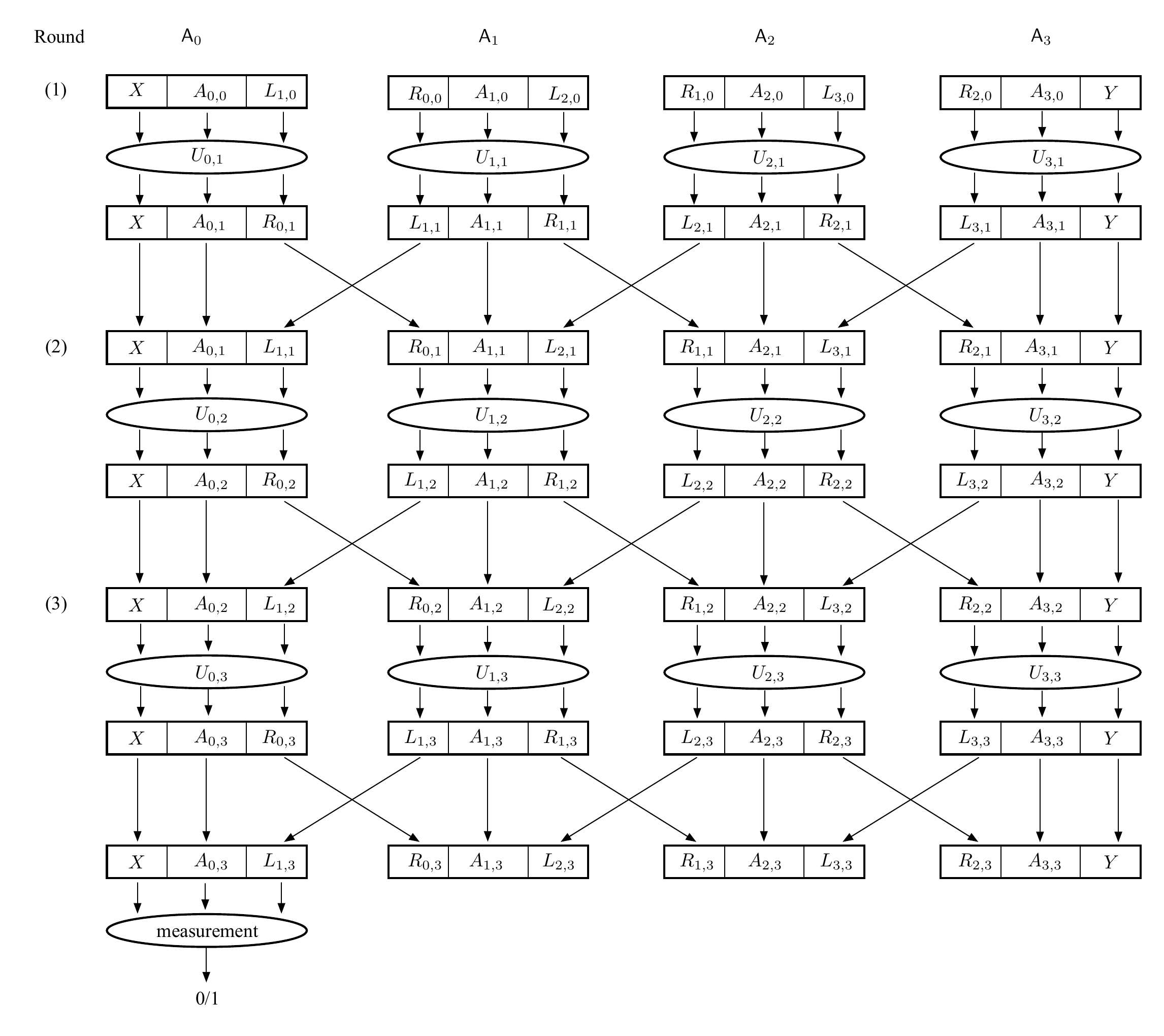}
\caption{A multi-party communication protocol on the line with~$4$ parties and~$3$ rounds, of the type we study for~$\linedisj_{n,d}$. For~$t \ge 1$, the subscripts~$i,t$ on a register indicate that the register was an ``output'' of the isometry applied by party~$\rA_i$ in round~$t$, and that it is in the corresponding state. For example, the register~$R_{1,3}$ was produced by the isometry applied by~$\rA_1$ in the third round.}
\label{fig-protocol}
\end{figure}

At the beginning of the first round, registers~$X$ and~$Y$ are
initialized to the input to the protocol. The qubits in the remaining 
registers are all initialised to a pure shared state that is
independent of the inputs. Note that this shared state also includes any
``work'' qubits in state~$\ancilla$ required by the parties.

In round~$t + 1$, each party~$\rA_i$ applies a unitary operation to the
registers they hold. We view the unitary operation as an isometry~$U_{i,
t + 1}$ that maps the registers to another sequence of registers with the
same dimensions. The registers~$X A_{0, t} \, L_{1, t}$ with~$\rA_0$
are mapped to~$X A_{0, t+1} \, R_{0, t+1}$. The registers~$R_{d - 1, t}
\, A_{d, t} \, Y$ with~$\rA_d$ are mapped to~$L_{d, t+1} \, A_{d,
t+1} \, Y$. For~$i \in [d-1]$, the registers~$R_{i - 1, t} \, A_{i, t}
\, L_{i + 1, t}$ with~$\rA_i$ are mapped to~$L_{i, t+1} \, A_{i, t+1} 
\, R_{i, t+1}$. So for~$t \ge 1$, the
subscripts~$i,t$ on a register indicate that the register was an ``output''
of the isometry applied by party~$\rA_i$ in round~$t$, and that it is in 
the corresponding state.

As the final action in round~$t+1$, for~$t < r$, if~$i > 0$, party~$\rA_i$ 
sends~$L_{i, t + 1}$ to the party on the left (i.e., to~$\rA_{ i - 1}$) 
and receives~$R_{i - 1, t + 1}$ from her; and if~$i < d$, she sends~$R_{i,
t + 1}$ to the party on the right (i.e., to~$\rA_{i + 1}$) and receives 
register~$L_{i + 1, t + 1}$ from her.

After the~$r$ rounds of the protocol have been completed, party~$\rA_0$
makes a two-outcome measurement, possibly depending on her input, on 
the registers~$A_{0, r} \, L_{1, r}$. The outcome is the output of the 
protocol. Figure~\ref{fig-protocol} depicts such a protocol.

\subsection{The two-party simulation}
\label{sec-simulation}

We now prove Lemma~\ref{lem-simulation}, by giving a formal description 
of the two-party protocol~$\Pi$ for Set Disjointness~$\disj_n$
derived from a protocol~$\Pi_d$ for~$\linedisj_{n,d}$. We use the 
notation and convention defined in Section~\ref{sec-line-protocols}
in our description below. For simplicity, we assume that the number of 
rounds~$r$ in~$\Pi_d$ is a multiple of~$d$, by adding dummy rounds with 
suitable local operations, if necessary. Since~$\disj_n$ depends 
non-trivially on \emph{both\/} inputs, the number of rounds~$r$ 
required to compute the function over a path of length~$d$ is at 
least~$d$. So the addition of dummy rounds may at most double the 
number of rounds.

\begin{figure}[h]
\centering
\includegraphics[width=350pt]{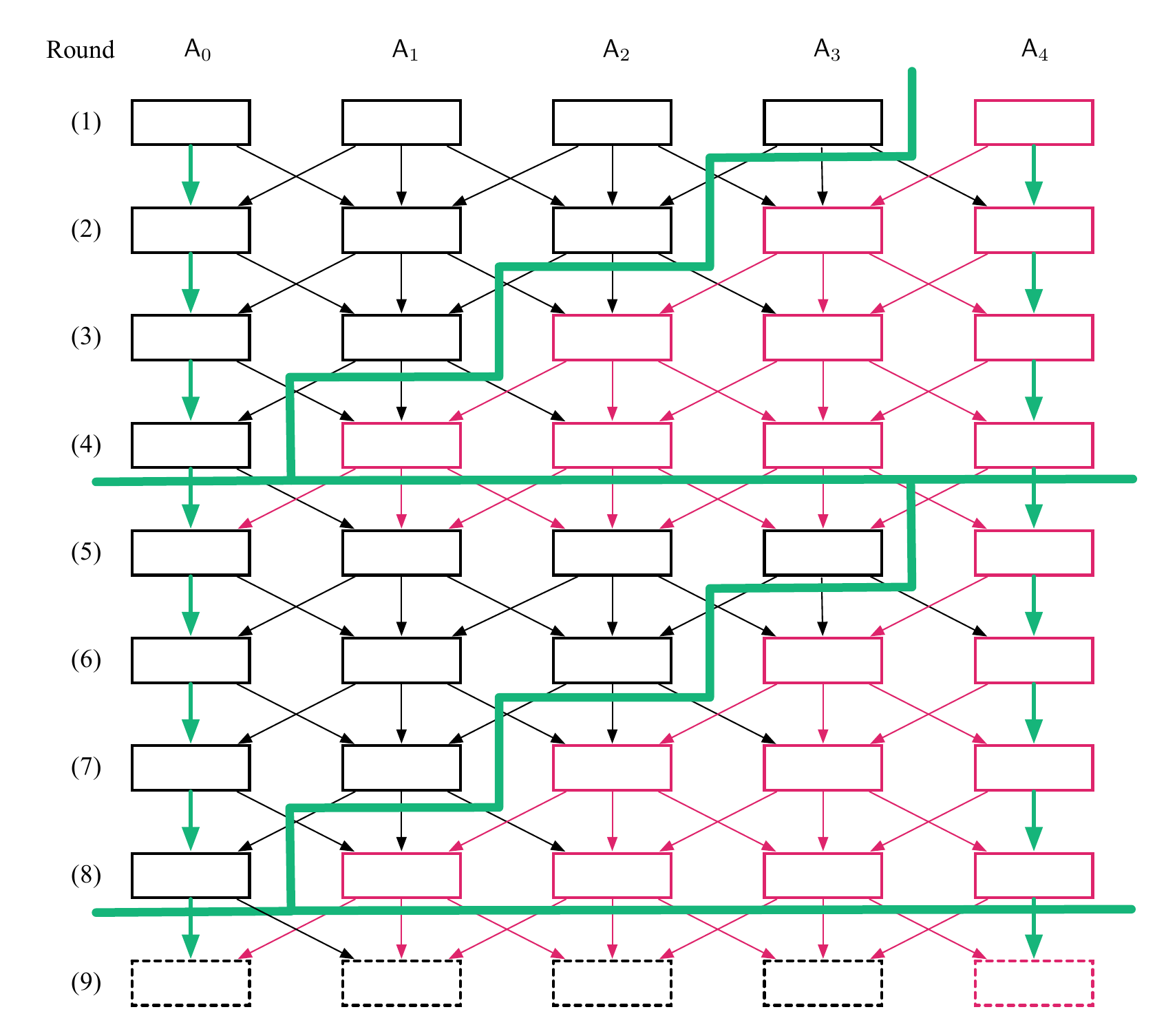}
\caption{A depiction of the two-party simulation of a multi-party communication protocol of the type we study for~$\linedisj_{n,d}$. Here, we have~$5$ parties and show the simulation of the first~$8$ rounds of the original protocol. Each round in the two-party protocol is delineated by thick green lines. The black rectangular boxes represent the isometries implemented by Alice, and the black arrows going across the thick green lines represent the communication from her to Bob. The red rectangular boxes represent the isometries implemented by Bob, and the red arrows going across the thick green lines represent the communication from him to Alice. The green arrows indicate that the input register and the local memory of the parties at the extremities are retained by them throughout.}
\label{fig-simulation}
\end{figure}

In the protocol~$\Pi$, Alice initially holds all the registers with
parties~$\rA_i$ for~$i < d$ at the beginning of the first round, and Bob 
holds the registers with~$\rA_d$. All of the registers are initialized 
as in~$\Pi_d$. The simulation implements blocks of~$d$ successive rounds 
of~$\Pi_d$ with two rounds in~$\Pi$, with Alice sending the message in
the first of the two rounds and Bob in the second. See Figure~\ref{fig-simulation} for a depiction of the simulation.

Assume that~$k$ blocks of~$d$ rounds each of~$\Pi_d$ have been 
implemented with~$2k$ rounds
in~$\Pi$, for some~$k \in [0, r/d - 1]$. We describe how the~$(k +
1)$-th block is implemented. Let~$t \eqdef kd$. We maintain the
invariant that at the beginning of the~$(2k + 1)$-th round in~$\Pi$,
Alice holds the registers~$X
A_{0, t} \, L_{1, t} $, and the registers~$R_{i - 1, t} \, A_{i, t} \,
L_{i + 1, t} $, for all~$i \in [d - 1]$. Alice's local operations in
round~$2k + 1$ are as follows. For each~$j \in \set{ t + 1, t + 2, 
t + 3, \dotsc, t + d }$ in increasing order (where~$j$ denotes a round 
in~$\Pi_d$), 
\begin{enumerate}

\item
Alice applies the isometry~$U_{0, j}$ to the registers~$X
A_{0, j - 1} \, L_{1, j - 1}$ to get registers~$X A_{0, j} \, R_{0, j} $.

\item 
For each~$l$ with~$1 \le l \le d - (j - t)$ (denoting a party 
from~$\Pi_d$), Alice applies the isometry~$U_{l, j}$ to the registers~$R_{l
- 1, j - 1} \, A_{l, j - 1} \, L_{l + 1, j - 1}$ to get registers~$L_{l,
j} \, A_{l, j} \, R_{l, j} $.

\item
For each~$l$ with~$1 \le l \le d - (j - t)$, Alice swaps
registers~$R_{ l - 1, j}$ and~$L_{ l, j} $.

\end{enumerate}
At this point, Alice has implemented the left upper triangular ``space-time 
slice'' of the~$(k + 1)$-th block of~$d$ rounds of~$\Pi_d$. She holds the 
registers
\begin{equation}
\label{eq-A-registers}
\begin{split}
X A_{0, t + d} \, R_{0, t + d} ~~ &
R_{0, t + d - 1} \, A_{1, t + d - 1} \, R_{1, t + d - 1} ~~
R_{1, t + d - 2} \, A_{2, t + d - 2} \, R_{2, t + d - 2} \\
R_{2, t + d - 3} \, A_{3, t + d - 3} \, R_{3, t + d - 3} & ~ 
\dotsb ~
R_{d - 3, t + 2} \, A_{d - 2, t + 2} \, R_{d - 2, t + 2} ~~
R_{d - 2, t + 1} \, A_{d - 1, t + 1} \, R_{d - 1, t + 1} \enspace,
\end{split}
\end{equation}
in the state implicitly specified by the subscripts. (The registers have
been grouped into threes, in the order of the parties that hold them in~$\Pi_d$.)
She sends all the registers \emph{except\/}~$X A_{0, t + d}$ to Bob.
This concludes the~$(2k + 1)$-th round of~$\Pi$.

We also maintain the invariant that at the beginning of the~$(2k + 2)$-th
round of~$\Pi$, Bob holds all the registers in Eq.~(\ref{eq-A-registers}) 
except~$X
A_{0, t + d} $, in addition to the registers~$R_{d - 1, t} \, A_{d, t}
Y$, where~$t = kd $. Bob's local operations in round~$2k + 2$ are as
follows. For each~$j \in \set{ t + 1, t + 2, t + 3, \dotsc, t + d }$ in
increasing order
(where~$j$ denotes a round in~$\Pi_d$ that Bob intends to complete),
\begin{enumerate}

\item
Bob applies the isometry~$U_{d, j}$ to the registers~$R_{d - 1, j - 1}
\, A_{d, j - 1} Y$ to get registers~$L_{d, j} \, A_{d, j} Y$.

\item
For each~$l$ with~$d - (j - t - 1) \le l \le d - 1$ (denoting a party
from~$\Pi_d$), Bob applies the isometry~$U_{l, j}$ to the
registers~$R_{l - 1, j - 1} \, A_{l, j - 1} \, L_{l + 1, j - 1}$ to get 
registers~$L_{l, j} \, A_{l, j} \, R_{l, j} $.

\item
For each~$l$ with~$d - (j - t - 1) \le l \le d$, Bob swaps
registers~$R_{l - 1, j}$ and~$L_{ l, j} $.

\end{enumerate}
At this point, Bob holds the registers
\begin{equation}
\label{eq-B-registers}
\begin{split}
L_{1, t + d} ~~ R_{0, t + d} \, A_{1, t + d} \, L_{2, t + d} ~~
R_{1, t + d} \, A_{2, t + d} & \, L_{3, t + d} ~~
R_{2, t + d} \, A_{3, t + d} \, L_{4, t + d} \\
\dotsb ~
R_{d - 2, t + d} \, A_{d - 1, t + d} \, L_{d, t + d} & ~~
R_{d - 1, t + d} \, A_{d, t + d} Y \enspace,
\end{split}
\end{equation}
in the state implicitly specified by the subscripts. The registers are
thus all in the state at the end of the~$(kd + d)$-th round in~$\Pi_d$.
Bob sends all the registers \emph{except\/}~$R_{d - 1, t + d} \, A_{d, t + d} Y$ to Alice.
This concludes the~$(2k + 2)$-th round of~$\Pi$, and the simulation of
the~$(k + 1)$-th block of rounds of~$\Pi_d$.

At the end of the simulation of the~$(r/d)$-th block of rounds
of~$\Pi_d$, Alice measures the registers~$A_{0, r} \, L_{1, r}$ as 
in~$\Pi_d$ to obtain the output. (As in~$\Pi_d$, this measurement may be 
controlled by the input register~$X$.) This completes the description of 
the two-party simulation. The 
correctness of the simulation follows by induction, by observing that 
Alice and Bob implement all the local operations and communication 
in~$\Pi_d$ in the correct order and with the correct registers.  
Lemma~\ref{lem-simulation} thus follows.

\subsection{Conditional information leakage of the two-party protocol} 
\label{sec-cil}

We are now ready to bound the conditional information leakage of the 
two-party protocol~$\Pi$ derived from the multi-party protocol~$\Pi_d$, 
with respect to a distribution~$\mu$ on the inputs and an auxiliary 
random variable. We initialise registers~$\hX \hY \hZ X Y Z $ to the 
canonical purification
\[
\sum_{x,y,z}  \sqrt{\mu(x,y,z)} \; \ket{xyz}^{\hX \hY \hZ} 
    \ket{xyz}^{X Y Z} \enspace,
\]
and run the protocol~$\Pi_d$ (and therefore~$\Pi$) on 
the input registers~$X$ and~$Y$, along with the other registers they
need. We use~$X, Y, Z$ to also refer to the input and auxiliary random 
variables. Suppose that~$\mu$ is such that~$X$ and~$Y$ are independent 
given~$Z$. We imagine that the purification register~$\hX$ is given to 
party~$\rA_0$ in~$\Pi_d$ (or to Alice in~$\Pi$), the register~$\hY$ is 
given to party~$\rA_d$ in~$\Pi_d$ (or to Bob in~$\Pi$), and the 
registers~$\hZ Z$ are held by a party not involved in either protocol.

We first bound the information contained about an input held by a party~$\rA_i$ ($i \in \set{0,d}$) in the 
registers held by all other parties~$\rA_j$, $j \neq i$, in~$\Pi_d$, conditioned on~$Z$.
For ease of notation, for~$t \ge 0$, we denote by~$D_t$ the entire
sequence of registers (including~$\hY$) held by the parties~$\rA_i$, with~$i \ge 1$, 
in the state at the end of the~$t$-th round of~$\Pi_d$. Similarly, 
we denote by~$C_t$ the entire sequence of registers (including~$\hX$) held by the 
parties~$\rA_i$, with~$i \le d - 1$, in the state at the end of 
the~$t$-th round of~$\Pi_d$.
\begin{lemma}
\label{lem-cil-pid}
For all~$t \ge 0$, we have~$\mi(X : D_t \,|\, Z) \le 2tb$, and~$\mi(Y :
C_t \,|\, Z) \le 2tb$.
\end{lemma}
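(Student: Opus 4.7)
The plan is to prove both bounds by induction on $t$, using the spatial locality of the protocol along the line. The key observation is that in each round, viewed from the perspective of the collection $D_t$ of registers held by $\rA_1,\dotsc,\rA_d$, the only interaction with the outside world is the exchange of the two $b$-qubit messages between $\rA_0$ and $\rA_1$: the register $L_{1,t}$ leaves $D_t$ (it is sent to $\rA_0$) and the register $R_{0,t}$ enters $D_t$ (it is received from $\rA_0$). Everything else in round $t$ is either a local isometry applied by one of $\rA_1,\dotsc,\rA_d$ on registers already in $D_{t-1}$, or an exchange of registers \emph{internal} to~$D_t$.

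For the base case~$t=0$, the registers inside $D_0$ other than $\hY$ and $Y$ are all initialised to the fixed pure state $\ket{0}$, so they contribute nothing to conditional mutual information. Because $X$ and $Y$ are independent given $Z$, the canonical purification of~$\mu$ factorises conditioned on $Z$ into a tensor product on $X\hX$ and $Y\hY$, and hence $\mi(X : D_0 \,|\, Z) = \mi(X : Y\hY \,|\, Z) = 0$.

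For the inductive step, assume $\mi(X : D_{t-1} \,|\, Z) \le 2(t-1)b$. Let $\tilde D_t$ denote the registers held by $\rA_1,\dotsc,\rA_d$ after they apply their round-$t$ isometries $U_{1,t},\dotsc,U_{d,t}$, but before the message exchange. Each of these isometries acts entirely on subregisters of $D_{t-1}$ and leaves $X$ and $Z$ alone, so by the isometric invariance of conditional mutual information, $\mi(X : \tilde D_t \,|\, Z) = \mi(X : D_{t-1} \,|\, Z) \le 2(t-1)b$. Now $D_t$ is obtained from $\tilde D_t$ by deleting $L_{1,t}$ and appending $R_{0,t}$. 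Discarding $L_{1,t}$ only decreases conditional mutual information (Lemma~\ref{lem-dpi}), while appending the $b$-qubit register $R_{0,t}$ can increase it by at most $2\,\entropy(R_{0,t}) \le 2b$, by Lemma~\ref{lem-mi-increase}. Chaining the two bounds yields $\mi(X : D_t \,|\, Z) \le 2(t-1)b + 2b = 2tb$. The symmetric bound $\mi(Y : C_t \,|\, Z) \le 2tb$ follows by exchanging the roles of~$X$ and~$Y$, and of the messages~$R_{0,t}$ and~$L_{d,t}$ crossing the boundary between $\rA_d$ and the rest.

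There is no substantive obstacle here beyond careful bookkeeping: the only subtle point is to identify $D_t$ correctly as being obtained from $D_{t-1}$ by a sequence of operations that either preserve conditional mutual information (the local isometries of the ``interior'' parties) or change it in a controlled way (removing one register, adding another of at most $b$ qubits). The argument must not inadvertently attribute the local memory of the interior parties to the bound, which is exactly what the information-theoretic formulation avoids.
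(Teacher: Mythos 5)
Your proof is correct and follows essentially the same route as the paper's: the same induction, the same base case (fixed initial states plus conditional independence of $X$ and $Y\hY$ given $Z$), and the same inductive step in which the round-$t$ isometries preserve $\mi(X : \cdot \,|\, Z)$, discarding $L_{1,t}$ can only decrease it by the data processing inequality, and adjoining the $b$-qubit register $R_{0,t}$ increases it by at most $2\entropy(R_{0,t}) \le 2b$ via Lemma~\ref{lem-mi-increase}. The paper merely carries out this bookkeeping with explicit register names ($G_{j+1}$, $H_{j+1}$) rather than your $\tilde D_t$.
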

\begin{proof}
We prove the bound on~$\mi(X : D_t \,|\, Z)$ by induction. 
The second bound is obtained similarly.

Let~$\sigma$ denote the state of the registers~$X D_t Z$, so that
\[
\sigma \quad = \quad \sum_z \lambda_z \; \sigma_z^{X D_t} \tensor \density{z}^Z \enspace,
\]
where~$\lambda$ is the marginal distribution of~$Z$, and~$\sigma_z$ is the state of the registers~$X D_t$, conditioned on the event~$Z = z$. By Lemma~\ref{lem-cq},
\[
\mi(X : D_t \,|\, Z)_\sigma \quad = \quad \expct_{z \sim \lambda} 
    \mi( X : D_t )_{\sigma_z} \enspace. 
\]
The base case~$t = 0$ is then immediate from the following observations.
The state of the registers of the parties~$\rA_i$, for~$i \in [d]$, except~$Y \hY$, 
is independent of the inputs, i.e., is in tensor product with the state of~$X Y \hY$.
Further, for every~$z$, the state of the registers~$Y \hY$ is in tensor product 
with that of~$X$, conditioned on~$Z = z$.

Assume that the bound holds for~$t = j$, with~$j \ge 0$. Let~$G_{j+1}$ 
denote the sequence of registers with all the parties~$\rA_i$, for~$i \ge 2$, 
after the isometry in round~$j + 1$ has been applied. Then we have
\[
\mi( X : L_{1, j+1}\, A_{1, j+1} \, R_{1, j+1} G_{j+1} \,|\, Z)
    \quad = \quad \mi(X : D_j \,|\, Z) \quad \le \quad 2jb \enspace,
\]
by the invariance of conditional mutual information under isometries and the induction hypothesis.
Let~$H_{j+1}$ denote all the registers of the parties~$\rA_i$, for~$i
\ge 2$, after the communication in round~$j + 1$. Then~$L_{2, j+1} \,
H_{j+1}$ and~$R_{1, j+1} \, G_{j+1}$ consist of the same set of registers,
but in different order.
By the properties of entropy and conditional mutual information mentioned 
below,
\begin{align*}
\mi( X : D_{j+1} \,|\, Z) 
    \quad & = \quad \mi( X : R_{0, j+1} \, A_{1, j+1} \, L_{2, j+1}
        \, H_{j+1} \,|\, Z) \\
    \quad & = \quad \mi( X : R_{0, j+1} \, A_{1, j+1} \, R_{1, j+1} \,
        G_{j+1} \,|\, Z) \\
    \quad & \le \quad 2 \entropy( R_{0, j+1} ) + \mi( X : A_{1, j+1} \,
        R_{1, j+1} \, G_{j+1} \,|\, Z) \\
    \quad & \le \quad 2b + \mi( X : L_{1, j+1}\, A_{1, j+1} \,
        R_{1, j+1} G_{j+1} \,|\, Z) \\
    \quad & \le \quad 2b + 2jb \enspace.
\end{align*}
The first inequality follows from Lemma~\ref{lem-mi-increase}, the second
by the property that~$\entropy(B)$ is bounded from above by the number of qubits in the 
register~$B$ and the data processing inequality (Lemma~\ref{lem-dpi}),
and the final one by the induction hypothesis.
\end{proof}

For~$l \in [2r/d]$,
denote the message registers in the~$l$-th round of the two-party 
protocol~$\Pi$ in the corresponding state together by~$M_l$. Denote the 
registers with Alice at the end of the~$l$-th round (including~$\hX$), in the
corresponding state, by~$E_l$, and the registers with Bob at the end of
the~$l$-th round (including~$\hY$), in the corresponding state, by~$F_l$.

Consider~$k \in [r/d]$. We observe from the definition of the protocol~$\Pi$, that for 
the odd numbered round~$2k - 1$, the state given by register~$D_{kd}$ is obtained by
an isometry on the registers~$M_{2k - 1} F_{2k - 2}$. The registers~$M_{2k - 1} 
F_{2k - 2}$ (in the state implicitly specified by their definition) are precisely 
the registers Bob holds at the end of round~$2k - 1$ of~$\Pi$. Moreover, for the
even numbered round~$2k$, the state given by the registers~$E_{2k - 1} M_{2k}$
is precisely the state given by the register~$C_{kd}$ in~$\Pi_d$. The 
registers~$E_{2k - 1} M_{2k}$ are precisely the registers Alice holds at the 
end of round~$2k$ of~$\Pi$. Therefore, by Lemma~\ref{lem-cil-pid} and the 
definition of conditional information leakage, we have:
\begin{corollary}
\label{cor-cil-pi}
For all~$k \in [r/d]$, we have
\begin{align*}
\mi( X : M_{2k - 1} F_{2k - 2} \,|\, Z)
    \quad & = \quad \mi( X : D_{kd} \,|\, Z) \quad \le \quad 2kdb
        \enspace, \qquad \text{and} \\
\mi( Y : E_{2k - 1} M_{2k} \,|\, Z)
    \quad & = \quad \mi( Y : C_{kd} \,|\, Z) \quad \le \quad 2kdb
        \enspace.
\end{align*}
Consequently, the conditional information leakage of~$\Pi$ is bounded
as~$\til( \Pi \,|\, XYZ) \le 4 r^2 b/ d $.
\end{corollary}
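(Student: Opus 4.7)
The plan is to assemble the corollary from three ingredients already in place: the structural correspondence between $\Pi$ and $\Pi_d$ set up in the paragraph immediately preceding the statement, Lemma~\ref{lem-cil-pid}, and the invariance of conditional mutual information under isometries recorded in Section~\ref{sec-qcc}.

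First I would handle the two equalities. For the odd-round identity, the preceding discussion already observes that $D_{kd}$---the sequence of registers held by the parties $\rA_1, \dotsc, \rA_d$ (together with $\hY$) at the end of round $kd$ of $\Pi_d$---can be produced from $M_{2k-1} F_{2k-2}$ by an isometry, namely the local operations that Bob performs in round $2k$ of $\Pi$ to complete the simulation of rounds $(k-1)d+1, \dotsc, kd$ of $\Pi_d$ (the ``lower-triangular'' block of $U_{i,j}$'s and swaps described in Section~\ref{sec-simulation}). Since conditional mutual information is invariant under isometries applied to one of its arguments, the equality $\mi(X : D_{kd}\,|\,Z) = \mi(X : M_{2k-1} F_{2k-2}\,|\,Z)$ follows. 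The even-round equality is handled symmetrically, using that $C_{kd}$ coincides up to an isometry with $E_{2k-1} M_{2k}$ by examining Alice's operations after she receives Bob's message. Lemma~\ref{lem-cil-pid} applied with $t = kd$ then bounds each of these mutual informations by $2kdb$, yielding the two displayed inequalities in the statement.

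To derive the bound on $\til(\Pi \,|\, XYZ)$, I would plug directly into the definition of conditional information leakage from Section~\ref{sec-qcc}. The protocol $\Pi$ has $m = 2r/d$ rounds, with Alice sending in odd rounds and Bob in even rounds, so the registers $B_i Y \hY$ and $A_i X \hX$ appearing in the definition of $\til$ are exactly $F_{2k-1} = M_{2k-1} F_{2k-2}$ for $i = 2k-1$ and $E_{2k} = E_{2k-1} M_{2k}$ for $i = 2k$. Summing the bound $2kdb$ from the two displayed inequalities over $k \in [r/d]$ for each parity and using $\sum_{k=1}^{r/d} k \le (r/d)^2$ yields $\til(\Pi \,|\, XYZ) \le 4 r^2 b / d$.

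The only non-routine step is the bookkeeping that identifies $D_{kd}$ and $M_{2k-1} F_{2k-2}$ (resp.\ $C_{kd}$ and $E_{2k-1} M_{2k}$) up to an isometry: one has to verify that the registers Bob holds at the end of round $2k-1$ of $\Pi$ contain, in the correct joint state, precisely what Bob needs to locally recreate the state of all parties $\rA_1, \dotsc, \rA_d$ after $kd$ rounds of $\Pi_d$. This is straightforward but tedious---it amounts to tracing through the register labels produced by the ``upper-triangular slice'' of isometries in Section~\ref{sec-simulation} and checking that the remaining ``lower-triangular'' isometries applied by Bob in round $2k$ assemble the collection $D_{kd}$. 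Everything else is a substitution into the definitions.
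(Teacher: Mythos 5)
Your proposal is correct and follows essentially the same route as the paper: the paper likewise obtains the two equalities from the isometric correspondence between $M_{2k-1}F_{2k-2}$ and $D_{kd}$ (resp.\ $E_{2k-1}M_{2k}$ and $C_{kd}$) established in the simulation, applies Lemma~\ref{lem-cil-pid} with $t=kd$, and sums over the definition of conditional information leakage. The only difference is that you spell out the summation $\sum_{k} 2kdb \le 2db(r/d)^2$ explicitly, which the paper leaves implicit.
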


\section{Two-oracle query algorithms with a switching delay}
\label{sec-d-query-algorithms}

In this section, we define a new model of query complexity, two-oracle query complexity with a ``switching delay'', motivated by the study of Set Disjointness on a Line~$\linedisj_{n,d}$. The lower bound technique involving conditional information leakage that we use to establish Theorem~\ref{thm-lb} extends to the analogue of Set Disjointness~$\disj_n$ in this model, with a switching delay of~$d$ queries. As a consequence, it yields the same lower bound on \emph{query\/} complexity. Furthermore, we design a quantum algorithm that matches this bound up to a polylogarithmic factor. This shows that the lower bound on conditional information leakage for~$\disj_n$ stated in Theorem~\ref{thm-jrs} is optimal up to a logarithmic factor. Therefore, if the lower bound for~$\linedisj_{n,d}$ is not optimal, we would require different ideas to improve it. Due to the similarities between the two models of computation, the algorithm also hints at the possibility of a more efficient distributed algorithm for Set Disjointness on a Line.

\subsection{The new query model}
\label{sec-d-query-model}

Turning to the definition of the query model, we consider query algorithms for computing bi-variate functions~$f : \set{0,1}^n \times \set{0,1}^n \rightarrow \set{0,1}$. We define the quantum version of the model; the classical versions may be defined analogously. The inputs~$x,y$ to the algorithm are provided indirectly, through oracles~$\cO_x$ and~$\cO_y$, as defined in Section~\ref{sec-query-complexity}. The query algorithm is defined in the standard manner, as an alternating sequence of unitary operators independent of the inputs~$x,y$, and queries~$\cO_x$ or~$\cO_y$, applied to a fixed initial state (that is also independent of the inputs). Thus, the sequence of queries to the inputs is pre-determined. However, we define the complexity of the algorithm differently. In addition to the queries, we charge the algorithm for switching between a query to~$x$ and a query to~$y$. We include a cost of~$d$ in the complexity whenever the algorithm switches between a query to~$x$ and a query to~$y$. This cost parallels the cost of accessing the inputs in the distributed computing scenario in which the inputs are physically separated by distance~$d$.

We may simplify the above model as follows, at the expense of increasing the complexity by a factor of at most~$2$. In the simplified model, we require that the queries be made in \emph{rounds\/}. In each round, the algorithm makes~$d$ queries, but exclusively to one of the inputs~$x$ or~$y$. Further, the algorithm alternates between the two oracles~$\cO_x$ and~$\cO_y$ in successive rounds. The complexity of the algorithm is now defined in the standard manner, as the total number of queries in the algorithm. Thus the complexity equals~$d$ times the number of rounds.

It is straightforward to verify that any algorithm with complexity~$q$ in the first model has complexity at most~$2q$ in the second model, for computing any function~$f$ that depends on both inputs (i.e., when the algorithm in the first model queries both oracles). Furthermore, any algorithm with complexity~$q$ in the second model has complexity at most~$2q$ in the first model. The two models are thus equivalent up to a factor of two in complexity.

The second model is also relevant in a ``semi-parallel'' scenario, where a sequence of~$d$ queries are made to~$x$ independently of the answers to~$d$ other queries made to~$y$ during the same time steps. Up to a factor of~$2$ in complexity, this semi-parallel model can be simulated by the second model above. We thus adopt the second model in the definition below.

\begin{definition}
A \emph{two-oracle delay-$d$ quantum query algorithm\/} is a query algorithm~$\mathcal{A}$ with (predetermined) access to two oracles~$\cO_1, \cO_2$, which may be decomposed into some number of contiguous sequences of unitary operators called rounds such that each round contains~$d$ queries to the same oracle, and the algorithm alternates between the two oracles in successive rounds. 
The \emph{round complexity\/} of~$\mathcal{A}$ is the number~$r$ of rounds in a decomposition of~$\mathcal{A}$ as above.
The \emph{delay-$d$ query complexity\/} of $\mathcal{A}$ is $d\times r$.
\end{definition}

We define the \emph{quantum two-oracle delay-$d$ round complexity\/} of a bi-variate function~$f$ as the minimum round-complexity of any two-oracle delay-$d$ quantum query algorithm computing~$f$ with probability of error at most~$1/3$, given oracles~$\cO_x, \cO_y$ for the inputs~$x,y$. We define the \emph{quantum two-oracle delay-$d$ query complexity\/} of~$f$ similarly. We may assume that~$d\leq n$, as otherwise, an algorithm can learn~$x$ and~$y$ in two rounds.

Adapting the tools developed in Section~\ref{sec-linedisj} we get the following lower bound.
\begin{theorem}
\label{thm-lb2}
Let $d\leq n$.
The quantum two-oracle delay-$d$ round complexity of Set Disjointness~$\disj_n$ is 
$\Omega(\sqrt{n}/d)$ and $\Omega(\sqrt[3]{n/(d \log n)} \,)$.
The quantum two-oracle delay-$d$ query complexity is $\Omega(\sqrt{n} \,)$ and $\Omega(\sqrt[3]{n d^2 / \log n} \,)$.
\end{theorem}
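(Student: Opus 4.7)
The plan is to prove the two round-complexity lower bounds separately; the query-complexity bounds then follow by multiplying by~$d$. The bound~$\Omega(\sqrt{n}/d)$ is immediate from the standard~$\Omega(\sqrt{n} \,)$ quantum query lower bound for Set Disjointness~\cite{Razborov03-set-disjointness,AA03-spatial-search}, since a delay-$d$, $r$-round algorithm makes exactly~$rd$ queries in total. For the bound~$\Omega(\sqrt[3]{n/(d \log n)} \,)$, I would follow the same strategy as in the proof of Theorem~\ref{thm-lb}: simulate the delay-$d$ algorithm by an~$r$-round two-party protocol for~$\disj_n$, bound its conditional information leakage, and invoke Theorem~\ref{thm-jrs}.

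The reduction is direct. Given an~$r$-round delay-$d$ algorithm~$\cA$ for~$\disj_n$, I construct a two-party protocol~$\Pi$ in which Alice holds~$x$, Bob holds~$y$, and the work register~$W$ of~$\cA$ is passed between them: in a round in which~$\cA$ queries~$\cO_x$, Alice receives~$W$, applies the~$d$ queries together with the surrounding input-independent unitaries, and sends~$W$ to Bob; symmetric behaviour handles rounds that query~$\cO_y$. The resulting protocol has at most~$r+1$ rounds (one extra if~$\cA$ begins with an~$\cO_y$ round), and the messages may be very large; but this is irrelevant, since the analysis uses the conditional information leakage of~$\Pi$, which is sensitive to the number of rounds rather than to message sizes.

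The heart of the argument is the analog of Lemma~\ref{lem-cil-pid}: I claim that $\mi(X : W Y \hY \,|\, Z) \in \Order(k d \log n)$ at the end of round~$k$ of~$\Pi$, and symmetrically for~$\mi(Y : W X \hX \,|\, Z)$. Writing $W = Q R$ with~$Q$ the query register (of dimension~$2n$), a single query to~$\cO_x$ is a unitary on~$Q$ controlled by~$X$ and does not affect~$RY\hY$. Lemma~\ref{lem-mi-increase} gives $\mi(X : Q R Y\hY \,|\, Z)_{\mathrm{after}} \le 2 \entropy(Q) + \mi(X : RY\hY \,|\, Z)_{\mathrm{after}} = 2\entropy(Q) + \mi(X : RY\hY \,|\, Z)_{\mathrm{before}}$, and Lemma~\ref{lem-dpi} upper-bounds the last quantity by $\mi(X : Q R Y\hY \,|\, Z)_{\mathrm{before}}$. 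Hence each~$\cO_x$-query increases $\mi(X : WY\hY \,|\, Z)$ by at most $2 \log(2n) \in \Order(\log n)$. Input-independent unitaries on~$W$ and queries to~$\cO_y$ (which act as unitaries on~$W$ or~$WY$) preserve $\mi(X : WY\hY \,|\, Z)$. Summing over~$\Order(rd)$ queries and using the symmetric bound for~$Y$ yields $\til(\Pi \,|\, XYZ) \in \Order(r^2 d \log n)$. Combining with Theorem~\ref{thm-jrs}, which gives $\til(\Pi \,|\, XYZ) \in \Omega(n/r)$ for a suitable input distribution, yields~$r^3 \in \Omega(n / (d \log n))$, as claimed. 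The main technical content is the per-query increase of~$\Order(\log n)$; everything else is routine bookkeeping that mirrors the proof of Corollary~\ref{cor-cil-pi}.
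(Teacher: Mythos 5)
Your proposal is correct and follows essentially the same route as the paper: the $\Omega(\sqrt{n}/d)$ bound via the standard $\Omega(\sqrt{n}\,)$ query lower bound (the paper restricts $y=1^n$ to reduce to unordered search, so cite BBBV-style search bounds rather than the communication-complexity papers), and the cube-root bound via a two-party simulation whose conditional information leakage grows by $\Order(\log n)$ per query, combined with Theorem~\ref{thm-jrs}. Your per-query accounting via Lemmas~\ref{lem-dpi} and~\ref{lem-mi-increase} is exactly the "same kind of reasoning as in Lemma~\ref{lem-cil-pid}" that the paper invokes, and the resulting $\Order(r^2 d \log n)$ leakage bound matches the paper's $2r^2 d(\log n+1)$.
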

Note that the first expression for either bound dominates when~$d^4 \in \Order( n \log^2 n)$. 

We briefly sketch the proof of Theorem~\ref{thm-lb2}. The query lower bound follows from the one on rounds. The~$\Omega( \sqrt{n}/ d)$ lower bound on rounds follows by observing that Set Disjointness~$\disj_n$ simplifies to the unordered search problem (OR function on~$n$ bits) in the standard quantum query model when we set~$y$ to be the all~$1$s string. For the second lower bound, we view a query to an oracle~$\cO_x$ or~$\cO_y$ as the exchange of~$2(\log n + 1)$ qubits between the algorithm and the oracle. So we can use any~$r$-round algorithm for computing~$f$ in the two-oracle delay-$d$ query model to derive a two-party communication protocol for computing~$f$ also with~$r$ rounds. The two parties run the query algorithm, each party sending all its registers to the corresponding player, whenever the algorithm switches between queries to~$x$ and queries to~$y$. In each round, the state of the algorithm (therefore the corresponding message) accumulates at most~$2d(\log n + 1)$ qubits of \emph{additional\/} information about either input. This is a consequence of the same kind of reasoning as in Lemma~\ref{lem-cil-pid}. Thus the conditional information leakage of the resulting two-party protocol may be bounded by~$2 r^2 d(\log n + 1)$. By Theorem~\ref{thm-jrs}, this is~$\Omega(n / r)$, so we get the~$\Omega(\sqrt[3]{n/(d \log n)} \,)$ lower bound for the number of rounds stated in Theorem~\ref{thm-lb2}.

\subsection{Algorithm for Set Disjointness}

Finally, we present an algorithm in the two-oracle model that matches the lower bounds stated in Theorem~\ref{thm-lb2}, up to polylogarithmic factors.
\begin{theorem}
\label{thm-algo}
Let $d\leq n$. The quantum two-oracle delay-$d$ round and query complexity of Set Disjointness~$\disj_n$ are 
\begin{itemize}
    \item $\Order(\sqrt{n \log n}/d)$ and $\Order(\sqrt{n \log n} \,)$, respectively, when~$d^4 \leq n \log^3 n$; and
    \item $\Order(\sqrt[3]{n/d} \,)$ and $\Order(\sqrt[3]{n d^2} \,)$, respectively, when~$d^4 \ge n \log^3 n$.
\end{itemize}
\end{theorem}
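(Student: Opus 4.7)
The plan is to establish Theorem~\ref{thm-algo} by constructing two quantum algorithms in the two-oracle delay-$d$ model, one for each parameter regime, and verifying that they meet the claimed bounds up to polylogarithmic factors. Both constructions will be based on amplitude amplification of an appropriately chosen subroutine, with the regime determining which subroutine is more efficient.

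For the small-$d$ regime ($d^4 \leq n \log^3 n$), I plan to adapt a standard $\widetilde{\Order}(\sqrt{n}\,)$-query algorithm for Set Disjointness---for instance, the Aaronson-Ambainis quantum walk on a Johnson graph $J(n,r)$---to the round structure of the delay-$d$ model. The setup phase of the walk makes $\Theta(r)$ consecutive queries to each oracle, which batch directly into rounds of $d$ queries each. The walk phase is handled by some combination of $d$-batched swap steps on a suitably large Johnson graph and parallel execution on a random partition of $[n]$, so that each round makes $\Theta(d)$ useful queries to a single oracle while respecting the alternation between $\cO_x$ and $\cO_y$. Standard amplification handles the case of an unknown number of intersections and contributes the $\sqrt{\log n}$ factor.

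For the large-$d$ regime ($d^4 > n \log^3 n$), the round budget $\sqrt[3]{n/d}$ is too tight for a walk-based approach. Here I plan to use amplitude amplification of a sampling-and-check subroutine: the subroutine picks a uniformly random set $S \subseteq [n]$ of size $k$, reads $x|_S$ in $\lceil k/d\rceil$ rounds of queries to $\cO_x$, then reads $y|_S$ in $\lceil k/d\rceil$ rounds to $\cO_y$, and classically tests for an intersection in $S$. An appropriate choice of $k$, possibly combined with a nested amplification or an inner Grover search within the sampled block, yields the claimed $\widetilde{\Order}(\sqrt[3]{n/d}\,)$ round complexity and $\widetilde{\Order}(\sqrt[3]{nd^2}\,)$ query complexity.

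The main obstacle is to achieve the tight round exponents $d^{-1}$ in the small-$d$ regime and $d^{-1/3}$ in the large-$d$ regime: a direct parallelisation or a direct amplitude amplification typically yields only the intermediate $d^{-1/2}$ dependence one would obtain by simulating the protocol $\Pi_d$ from Section~\ref{sec-linedisj}. Overcoming this requires carefully using the structure of the delay-$d$ query model---in particular its ability to retain a single coherent state across rounds, which distinguishes it from the analogous two-party communication setting where the inputs are split between separate parties. The secondary obstacles of reducing the general case to inputs with a unique intersection, and tracking the precise polylogarithmic factors, are handled by standard techniques.
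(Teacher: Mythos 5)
There is a genuine gap in the large-$d$ regime, and it stems from dismissing the approach that actually works. You write that for $d^4 > n\log^3 n$ ``the round budget $\sqrt[3]{n/d}$ is too tight for a walk-based approach,'' and you substitute amplitude amplification of a sample-and-check subroutine. But that substitute provably falls short: amplitude amplification must run the subroutine \emph{and its inverse} at every iteration, so each of the $\Theta(\sqrt{n/k}\,)$ iterations re-reads $x|_S$ at a cost of $\max\{1,\lceil k/d\rceil\}$ rounds, giving a total of order $\sqrt{n/k}\cdot\max\{1,k/d\}\ \ge\ \sqrt{n/d}$ rounds for every choice of $k$ (the minimum is at $k=d$). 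Nesting an inner Grover search over $y|_S$ does not change this, since the dominant per-iteration cost is the re-sampling of $x|_S$, not the check. You correctly identify that direct amplification only yields the $d^{-1/2}$ exponent, but the proposal never supplies the mechanism that beats it; ``retaining a single coherent state across rounds'' is the right slogan, not a construction.

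The paper's single construction for both regimes is an MNRS-style quantum walk on the Johnson graph $J(n,t)$ whose vertex state stores $(i,x_i)$ for $i$ in a $t$-subset $I$. The point is amortisation: the Set-up cost of $t/d$ rounds is paid \emph{once}, and between consecutive Checking steps the walk performs only $\Order(\sqrt{t}\,)$ Update queries to $x$ (not $t$), while Checking is a small-error search over $I$ using $\Order(\sqrt{t\log(n/t)}\,)$ queries to $y$ alone. This yields round complexity of order $\max\{1,t/d\}+\sqrt{n/t}\,(\max\{1,\sqrt{t\log n}/d\}+\max\{1,\sqrt{t}/d\})$. In the large-$d$ regime one takes $t=\sqrt[3]{nd^2}$, so that each walk iteration fits in $\Order(1)$ rounds and the total is $t/d+\sqrt{n/t}=\Order(\sqrt[3]{n/d}\,)$ --- precisely the walk-based bound you ruled out. (In the small-$d$ regime your sketch is at least in the right family, though the $\sqrt{\log n}$ factor in the paper comes from driving down the error of the inner search in the Checking step, via Buhrman--Cleve--de~Wolf--Zalka, not from amplification over an unknown number of intersections.) To repair the proposal you would need to replace the sample-and-check amplification by the quantum walk with the stated parameter $t$, or exhibit some other amortised structure with the same effect.
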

\begin{proof}
We present a quantum two-oracle delay-$d$ query algorithm with a parameter~$t \in [n]$, which gives the round and query bounds for suitable choices of~$t$ depending on how large~$d$ is as compared with~$n$.

\suppress{
The quantum algorithm runs in two stages. First, it 
searches for a subset~$I \subseteq [n]$ of size~$t$ such that it contains an index~$i\in [n]$ with~$x_i=y_i=1$. If it succeeds in finding such a subset~$I$, in the second stage, the algorithm looks for an index~$i\in I$ such that~$x_i = y_i = 1$. For this, it sequentially runs through the indices in~$I$ and checks if the requisite condition is satisfied. The second stage can thus be implemented in $\Order(\max \set{ 1, t/d } )$ rounds. The choice of~$t$ is such that the number of rounds in the first stage always dominates, and gives us the stated bounds. 
}

The quantum algorithm
searches for a subset~$I \subseteq [n]$ of size~$t$ such that it contains an index~$i\in [n]$ with~$x_i=y_i=1$. If it succeeds in finding such a subset~$I$, we may also find an index~$i\in [n] $ with~$x_i=y_i=1$ without increasing the asymptotic complexity of the algorithm (although this is not required for computing~$\disj_n$). For this, the algorithm sequentially runs through the indices in~$I$ and checks if the requisite condition is satisfied. This second stage of the algorithm can thus be implemented in $\Order(\max \set{ 1, t/d } )$ rounds. The choice of~$t$ is such that the number of rounds in the first stage always dominates, and gives us the stated bounds. 

We describe the first stage next. In order to identify a subset~$I$ containing an index~$i$ as above, if there is any, we implement a search algorithm based on a quantum walk on the Johnson Graph~$J(n,t)$, following the framework due to Magniez, Nayak, Roland, and Santha~\cite{mnrs11}. The vertices of~$J(n,t)$ are $t$-subsets of~$[n]$. There is an edge between two vertices~$I, I'$ in~$J(n,t)$ iff~$I$ and~$I'$ differ in exactly~$2$ elements: $(I \setminus I') \union (I' \setminus I) = \set{i,j}$ for distinct elements~$i,j \in [n]$.

The three building blocks of such an algorithm are as follows.

\begin{description}
    \item[Set-up:] Construct the following starting superposition:
    $$ \binom{n}{ t}^{-1/2} \sum_{I\subseteq [n] \;:\; |I|=t} \ket{(i,x_i) : i \in I}  \enspace.$$
    
   \item[Checking:] Check whether $x_i=y_i=1$ for some $i\in I$:
$$
\ket{(i,x_i) : i \in I} \quad \mapsto \quad 
\left\{
\begin{array}{rl}
-\ket{(i,x_i) : i \in I}, & \text{if $x_i=y_i=1$ for some $i\in I$} \enspace; \\
\ket{(i,x_i) : i \in I}, & \text{otherwise.}
\end{array}
\right.
$$
\item[Update:] Replace some index~$j \in I$ by an index~$k\not\in I$, and update the corresponding bit~$x_j$ to~$x_k$:
$$
\ket{(i,x_i) : i \in I}\ket{j}\ket{k} \quad \mapsto \quad \ket{(i,x_i) : i \in (I\setminus\{j\})\cup\{k\}}\ket{k}\ket{j}  \enspace.
$$
\end{description}

Let $\varepsilon$ be the probability that a uniformly random $t$-subset of~$[n]$ contains an index~$i$ such that $x_i = y_i = 1$, given that such an element~$i$ exists. We have~$\varepsilon \in \Omega(t/n)$. 
Then, according to Theorem~1.4 in Ref.~\cite{mnrs11}, there is an algorithm based on quantum walk that finds a subset~$I$ such that $x_i=y_i=1$ for some $i\in I$, if there is any such subset, with constant probability~$> 1/2$. 
The algorithm uses one instance of \textbf{Set-up}, and~$\Order( \sqrt{1/\varepsilon} \,)$ alternations of one instance of \textbf{Checking} with a sequence of~$\Order(\sqrt{t} \,)$ instances of \textbf{Update},  interspersed with other unitary operations that are independent of the inputs~$x,y$. (The spectral gap of the Johnson graph needed in the analysis of the algorithm may be derived from the results in Ref.~\cite{K93-combinatorial-matrices}, for example.)

Note that \textbf{Set-up} uses~$t$ queries to $x$, and thus can be implemented in $\max(1,2 \ceil{t/d})$ rounds. \textbf{Update} only requires $2$ queries to $x$. Thus a sequence of $\sqrt{t}$ sequential \textbf{Update} operations can be implemented in order~$\max(1,2\sqrt{t}/d)$ rounds. 
We would like to use the Grover algorithm for unordered search to implement the checking step. The Grover algorithm incurs non-zero probability of error in general, while the algorithm due to Magniez \etal\ assumes that the checking step is perfect. We therefore use an algorithm for unordered search with small error due to Buhrman, Cleve, de Wolf, and Zalka~\cite{BCdWZ99-small-error-search} to implement \textbf{Checking} with error at most~$c \sqrt{t/n}$ for a suitable positive constant~$c$ with order~$\sqrt{t \log(n / t) }$ queries to $y$. Using standard arguments, this only increases the error of the quantum walk algorithm by a small constant, say~$1/10$. In effect, \textbf{Checking} (with the stated error) can be implemented in order~$\max(1,\sqrt{t \log n}/d)$ rounds. Thus the bound on the round complexity of the quantum walk algorithm is of the order of
\begin{equation}
\label{eq-rounds}
\max \set{1, \frac{t}{d} } + \sqrt{ \frac{n}{t}} \left( \max \set{ 1, \frac{ \sqrt{t \log n} }{ d} } 
    + \max \set{ 1, \frac{ \sqrt{t}}{d} } \right) \enspace.
\end{equation}

In order to derive the bounds stated in the theorem, we optimise over~$t$. We consider intervals of values for~$t$ such that each of the expressions involving maximisation in Eq.~(\ref{eq-rounds}) simplifies to one of the terms. The intervals are given by partitioning~$[n]$ at the points~$d, d^2/\log n, d^2$. (Note that~$d$ need not be smaller than~$d^2 / \log n$.) We optimise the number of rounds within each interval, which in turn gives us a relation between~$d$ and~$n$ for which the rounds are minimised. 

We first consider~$d \le \log n$, so that~$d^2 / \log n \le d$, and~$t$ in the intervals
\[
[1, d^2 /\log n], \quad [d^2 / \log n, d], \quad [d, d^2], \quad \text{and}~[d^2, n] \enspace.
\]
We optimise the number of rounds with~$t$ in each of these intervals, to find that the number of rounds is~$\Order( \sqrt{n \log n} / d)$ when~$t \eqdef d$. The optimal values of~$t$ in the other intervals also give the same bound, but we stay with~$t = d$ so as to minimise the rounds in the (optional) second stage of the algorithm.

Next we consider~$d \ge \log n$, so that~$d \le d^2 / \log n \le d^2$. We again optimise over~$t$ in four intervals, and get the following bounds:
\begin{enumerate}
    \item $t \in [1, d]$: $\Order( \sqrt{n / d} \,)$ when~$t \eqdef d$.
    \item $t \in [d, d^2 / \log n]$: $\Order( \sqrt[3]{n/d} \,)$ when~$t \eqdef \sqrt[3]{n d^2}$, \emph{provided\/}~$d^4 \ge n \log^3 n$. If~$d^4 \le n \log^3 n$, we get~$\Order( \sqrt{n \log n}/ d )$ when~$t \eqdef d^2 / \log n$.
    \item $t \in [d^2 / \log n, d^2]$: $\Order( \sqrt{n \log n}/d)$ when~$t \eqdef d^2 / \log n$ \emph{provided\/}~$d^4 \le n \log^3 n$. If~$d^4 \ge n \log^3 n$, we get~$\Order(d / \log n)$ with the same value of~$t$.
    \item $t \in [d^2, n]$: $\Order( \sqrt{n \log n}/d)$ when~$t \eqdef d^2$ \emph{provided\/}~$d^4 \le n \log n$. If~$d^4 \ge n \log n$, we get~$\Order(d)$ with the same value of~$t$.
\end{enumerate}
Since~$\sqrt{n/d} \ge \sqrt{n \log n}/ d$ when~$d \ge \log n$, $\sqrt{n \log n}/ d \le d$ when~$d^4 \ge n \log n$, and~$(n/d)^{1/3} \le d/ \log n$ when~$ d^4 \ge n \log^3 n $, we conclude the bounds on round complexity stated in the theorem:
\begin{itemize}
    \item $\Order( \sqrt{n \log n} / d)$ with~$t \eqdef d$ when~$d \le \log n$, or with~$t \eqdef d^2 / \log n$ when~$\log^4 n \le d^4 \le n \log^3 n \,$, and
    \item $\Order( \sqrt[3]{n/d} \,)$ with~$t \eqdef \sqrt[3]{n d^2}$ when~$d^4 \ge n \log^3 n \,$.
\end{itemize}
The bounds on query complexity follow.
\end{proof}

Note that in the range of parameters such that~$n \log^2 n \le d^4 \le n \log^3 n$, the upper bound~$\sqrt{ n \log n}/ d$ is at most~$\sqrt{\log n}$ times the lower bound~$\sqrt[3]{n / d \log n} \,$. So the bounds in Theorems~\ref{thm-lb2} and~\ref{thm-algo} are indeed within polylogarithmic factors of each other for all values of~$d,n$ (such that~$d \le n$).

\section{Conclusion}
\label{sec-concl}

In this work, we studied a fundamental problem, Set Disjointness, in two concrete computational models. Set Disjointness on the Line~$\linedisj_{n,d}$ reveals new subtleties in distributed computation with quantum resources. It again puts the spotlight on the ``double counting'' of information in conditional information loss (and leakage). One may think that the more sophisticated notion of \emph{quantum information cost\/} introduced by Touchette~\cite{Touchette15-QIC}, along with the results due to Braverman \etal~\cite{BGKMT18-bounded-round-disj}, might help us overcome this drawback. Indeed, quantum information cost helps us overcome the limitations of the former quantity in the case of Set Disjointness in the standard two-party communication model. Surprisingly, these techniques do not seem to help in obtaining a better lower bound for~$\linedisj_{n,d}$. (An analysis of the quantum information cost of the two-party protocol derived in Lemma~\ref{lem-simulation}, along with the lower bound on this quantity shown by Ref.~\cite{BGKMT18-bounded-round-disj}, gives us a bound that is a poly-logarithmic factor smaller than the one we derive in Theorem~\ref{thm-lb}.) We believe that new ideas may be needed to characterise its asymptotic round complexity. 

The two-oracle query model we introduce gives us a different perspective on Set Disjointness on a Line. It implies that any improvement to the round lower bound  for~$\linedisj_{n,d}$ would necessarily go beyond the use of conditional information leakage for two-party protocols for Set Disjointness. The algorithm also suggests that more efficient protocols for~$\linedisj_{n,d}$ may exist. More generally, the new query model is tailored towards the study of distributed algorithms on the line and could shed light on protocols for other similar problems. Moreover, the model could also be of relevance in other distributed computation scenarios.

\bibliography{refs}

\appendix

\section{Conditional information leakage of Set disjointness}
\label{sec-jrs}

Theorem~\ref{thm-jrs}, the lower bound on the conditional information leakage of bounded-round protocols for Set Disjointness due to Jain, Radhakrishnan, and Sen~\cite{JRS03-set-disjointness} is not stated explicitly in their article. In this section, we explain how the theorem may be inferred from their work.

Jain \etal\ implicitly analyse the conditional information leakage of protocols for Set Disjointness and relate it to the conditional information loss of a protocol for the two-bit AND function. For completeness, we define the latter quantity using the same notation as for the conditional information leakage of a quantum communication protocol~$\Pi$ introduced in Sec.~\ref{sec-qcc}. 

The \emph{conditional information loss\/}~$\il( \Pi \,|\, XYZ)$ of the protocol~$\Pi$ is defined as
\[
\il( \Pi \,|\, XYZ) \quad \eqdef \quad \sum_{i \in [m], ~ i \text{ odd}} \mi( X : B_i Y \,|\, Z) 
    + \sum_{i \in [m], ~ i \text{ even}} \mi( Y : A_i X \,|\, Z) \enspace,
\]
where the registers are implicitly assumed to be in the state given by the protocol. The difference between conditional information leakage and loss lies in the inclusion of the purification registers~$\hX$ and~$\hY$ in the mutual information terms in the former quantity. By the Data Processing Inequality (Lemma~\ref{lem-dpi}), the conditional information leakage of a protocol is at least as large as its conditional information loss.

Jain \etal\ prove an~$\Omega(n / m^2)$ lower bound on the communication required by any~$m$-round entanglement-assisted two-party quantum communication protocol~$\Gamma$ for Set Disjointness. They show this in three steps. In the first step, they bound the conditional information leakage~$\til(\Gamma| XYZ)$ by~$2 m c$, where~$c$ is the total number of qubits exchanged in~$\Gamma$, and~$XYZ$ are any jointly distributed random variables such that~$X, Y \in \set{0, 1}^n$ and~$X$ and~$Y$ are independent given~$Z$. In the second step, they show that there is a specific distribution for~$XYZ$ with the properties stated above, a protocol~$\Gamma'$ for the two-bit AND function (derived from~$\Gamma$), and jointly distributed random variables~$X' Y' Z'$ with~$X', Y' \in \set{0,1}$ such that conditional information \emph{loss\/}~$\il(\Gamma'|X'Y'Z')$ is at most~$\til(\Gamma| XYZ) /n$. In the third step, they show that~$\il(\Gamma'|X'Y'Z')$ for AND is at least $\Omega(1/m)$. Theorem~\ref{thm-jrs} follows by combining the last two steps. The auxiliary random variable~$Z$ and the conditioning on this random variable are what enable the \emph{direct sum\/} property underlying the reduction in the second step.

The purification registers used in conditional information leakage are required so that the joint state of the two parties in the protocol~$\Gamma'$ for the AND function derived from~$\Gamma$ on any fixed input is pure until the measurement used for producing the output. This property is crucial for the third step of the proof described above. (It turns out, though, that for the distribution~$ X' Y' Z'$ they derive, the conditional information leakage of~$\Gamma'$  coincides with its conditional information loss.)

Jain \etal\ present their communication lower bound in more generality, for~$t$-party protocols, for~$t \ge 2$, and for a class of functions that includes Set Disjointness. The first two steps are proven together in Lemma~2 in Ref.~\cite{JRS03-set-disjointness}, and the first step can be inferred from the derivation of Eq.~(1) in the proof. The third step is proven in Lemma~3.

Theorem~\ref{thm-jrs} may be easier to infer from the arXiv pre-print~\cite{JRS03-arxiv-set-disjointness}, as this version concerns two-party protocols. Conditional information loss is presented in Definition~5, the first two steps described above are proven together in Lemma~3, and the third step is proven in Lemma~4.

\section{States in two-party communication protocols}
\label{sec-yao93}

As it appears not to be well-known, we include a result on the structure of the joint states in a two-party quantum communication protocol \emph{without\/} shared entanglement, and a consequence of relevance to us. We state the result in the notation introduced in Section~\ref{sec-qcc}.

\begin{lemma}[Yao~\cite{Yao93-quantum-circuit-complexity}]
\label{lem-yao93}
Consider a two-party quantum communication protocol without shared entanglement in which Alice gets input~$x$ and Bob gets input~$y$. The joint state~$\ket{\psi_k (x,y)}$ of the work registers~$A_k B_k$ at the end of the~$k$-th round of the protocol may be expressed as
\[
\ket{\psi_k (x,y)} \quad = \quad \sum_{c \, \in \, \set{0,1}^{q_k}} \ket{ \phi(x,c) }^{A_k} \tensor \ket{ \xi(y,c) }^{B_k} \enspace, 
\]
where~$q_k $ is the sum of the lengths of the first~$k$ messages, and~$\ket{ \phi(x,c) }$ and~$\ket{ \xi(y,c) }$ are possibly non-normalised states of appropriate dimension that depend only on~$x,c$ and~$y,c$, respectively.
\end{lemma}
\begin{proof}
We prove the statement by induction over~$k$. For~$k = 0$, we have~$q_k \eqdef 0$, and we may write the state of~$A_0 B_0$ as
\[
\ket{\psi_0 (x,y)} \quad = \quad \ancilla^{A_0} \tensor \ancilla^{B_0} \enspace.
\]
Assume that the statement holds for~$k = j$, for some~$j \ge 0$, so that
\[
\ket{\psi_j (x,y)} \quad = \quad \sum_{c \, \in \,  \set{0,1}^{q_j}} \ket{ \phi(x,c) }^{A_j} \tensor \ket{ \xi(y,c) }^{B_j} \enspace, 
\]
with~$q_j$, $\ket{ \phi(x,c) }$, and~$\ket{ \xi(y,c) }$ as in the statement of the lemma.

Consider round~$j + 1$. Suppose that Alice sends the message in the~$(j+1)$-th round; the other case is analogous. Suppose Alice applies the isometry~$U_x$ (depending on her input~$x$) to the work register~$A_j$ to obtain registers~$A_{j+1} M_{j+1}$ in state~$\ket{ \phi'(x,c) } \eqdef U_x \ket{ \phi(x,c) }$. She then sends the message register~$M_{j+1}$ to Bob. Bob's work register at the end of the~$(j+1)$-th round is then~$B_{j+1} \eqdef M_{j+1} B_j$. Suppose~$M_{j+1}$ consists of~$q$ qubits. For each pair~$x,c$, We may express the state~$\ket{ \phi'(x,c) }$ as
\[
\ket{ \phi'(x,c) } \quad = \quad \sum_{c' \, \in \, \set{0,1}^{q}} \ket{ \phi(x,c c') }^{A_{j+1}} \tensor \ket{ c' }^{M_{j+1}} \enspace,
\]
for suitable non-normalised states~$\ket{ \phi(x,c c') }$ of appropriate dimension.
So we may write the state~$\ket{\psi_{j+1} (x,y)}$ as
\[
\ket{\psi_{j+1} (x,y)} \quad = \quad \sum_{c \, \in \, \set{0,1}^{q_j}} \sum_{c' \, \in \, \set{0,1}^{q}} \ket{ \phi(x,c c') }^{A_{j+1}} \tensor \ket{ \xi(y,c c') }^{B_{j+1}} \enspace, 
\]
where
\[
\ket{ \xi(y,c c') } \quad \eqdef \quad \ket{ c' }^{M_{j+1}} \tensor \ket{ \xi(y,c) }^{B_j} \enspace.
\]
This proves the lemma.
\end{proof}

The lemma implies that the state of the register~$B_m$ at the end of an~$m$-round protocol has support in the linear span of the states~$ \set{ \ket{ \xi(y,c) } : c \in \set{0,1}^m }$, when the input given to Bob is~$y$. When Bob has no input, the support has dimension at most~$2^{q_m}$, independent of Alice's input~$x$. We may thus define an isometry~$W$ such that~$W \ket{ \xi(c) } =  \ket{ \xi'(c) } \tensor \ancilla$, where~$\ket{ \xi'(c) }$ is a~$q_m$-qubit state. Effectively, we need at most~$q_m$ qubits to store the final state. This argument may be extended to all the previous rounds, i.e., we may define suitable isometries to store the states in each round in~$q_m$ qubits. Let~$W_j$ be the isometry used for this purpose at the end of the~$j$-th round, for~$j \ge 0$. We may also modify the isometry~$V_j$ applied by Bob in the~$j$-th round to~$V'_j \eqdef W_j V_j W_{j-1}^\adjoint$. If Bob performs the measurement to produce the output of the protocol, the measurement may be modified similarly. Thus, we get a protocol in which Bob uses at most~$q_m$ work qubits throughout. 

In the context of the second paragraph after Lemma~\ref{lem-simulation}, for a fixed~$i \in [d-1]$, suppose Bob simulates the actions of party~$\rA_i$, while Alice simulates the actions of all the other parties. Then the protocol~$\Pi_d$ translates to a two-party protocol with~$2r$ rounds and total communication of~$4rb$. The isometries used by~$\rA_i$ are all independent of the inputs to~$\Pi_d$. The claim made in the said paragraph then follows.

\end{document}